\newtheorem{Lemma}{Lemma}
\newtheorem{Theorem}[Lemma]{Theorem}
\newtheorem{Definition}[Lemma]{Definition}
\newtheorem{Proposition}[Lemma]{Proposition}
\def\figwidth{0.48\columnwidth}
\begin{document}

\title{On the Inducibility of Stackelberg Equilibrium for Security Games}

\author{Qingyu Guo\textsuperscript{1}, Jiarui Gan\textsuperscript{2}, Fei Fang\textsuperscript{3}, Long Tran-Thanh\textsuperscript{4}, Milind Tambe\textsuperscript{5}, Bo An\textsuperscript{1}\\
\textsuperscript{1}School of Computer Science and Engineering, Nanyang Technological University, \{qguo005, boan\}@ntu.edu.sg\\
\textsuperscript{2}Department of Computer Science, University of Oxford, Jiarui.gan@cs.ox.ac.uk\\
\textsuperscript{3}School of Computer Science, Carnegie Mellon University, feifang@cmu.edu\\
\textsuperscript{4}Department of Electronics and Computer Science, University of Southampton, ltt08r@ecs.soton.ac.uk\\
\textsuperscript{5}Center for Artificial Intelligence in Society, University of Southern California, tambe@usc.edu
}
\maketitle

\begin{abstract}
\emph{Strong Stackelberg equilibrium} (SSE) is the standard solution concept of Stackelberg security games.
As opposed to the \emph{weak Stackelberg equilibrium} (WSE), the SSE assumes that the follower breaks ties in favor of the leader and this is widely acknowledged and justified by the assertion that the defender can \emph{often} induce the attacker to choose a preferred action by making an infinitesimal adjustment to her strategy. Unfortunately, in security games with resource assignment constraints, the assertion might not be valid; it is possible that the defender cannot induce the desired outcome.
As a result, many results claimed in the literature may be overly optimistic.
To remedy, we first formally define the utility guarantee of a defender strategy and provide examples to show that the utility of SSE can be higher than its utility guarantee. Second, inspired by the analysis of leader's payoff by~\citeauthor{von04}~\shortcite{von04}, we provide the solution concept called the \emph{inducible Stackelberg equilibrium} (ISE), which owns the highest utility guarantee and always exists. Third, we show the conditions when ISE coincides with SSE and the fact that in general case, SSE can be extremely worse with respect to utility guarantee.
Moreover, introducing the ISE does not invalidate existing algorithmic results as the problem of computing an ISE polynomially reduces to that of computing an SSE.
We also provide an algorithmic implementation for computing ISE, with which our experiments unveil the empirical advantage of the ISE over the SSE.
\end{abstract}

\section{Introduction}
\label{sec:intro}
The past few years have witnessed the huge success of game theoretic reasoning in the security domain~\cite{tambe11,An17}. Models based on the Stackelberg security game (SSG) have been deployed to protect high-profile infrastructures, natural resources, large public events, etc. (e.g.,~\cite{tsai09,Jain10,Yin14,Fang16,Basilico17}).
An SSG models the interaction between a defender and an attacker, where the defender commits to a mixed strategy first, and the attacker best responds with knowledge of the defender strategy.

The Stackelberg equilibrium is the standard solution concept for Stackelberg games~\cite{leitmann78}. In such an equilibrium, no player has the incentive to deviate and the leader assumes that deviations made will result in optimal responses of the follower when evaluating the benefit of deviations.
The tie-breaking rules differentiates two forms of Stackelberg equilibria.
The \emph{strong} form of the Stackelberg equilibrium, called the \emph{strong Stackelberg equilibrium} (SSE), assumes that the follower always breaks ties by choosing the best action for the defender, whereas its counterpart, the \emph{weak Stackelberg equilibrium} (WSE), assumes that the follower always chooses the worst action.
The SSE is commonly adopted as the standard solution concept because the WSE may not exist~\cite{von04}; and the counter-intuitive tie-breaking rule
is justified, implicitly or explicitly in the literature, by the assertion that the defender can often induce the favorable strong equilibrium by selecting a strategy arbitrarily close to the equilibrium.

Unfortunately, the assertion may break, especially in scenarios with various resource assignment constraints, such as scheduling constraints in the Federal Air Marshals Service (FAMS) domain, constraints on patrol paths for protecting ports, and constraints in the form of protection externalities~\cite{tsai09,Jain10,Shieh12,gan15}.
Most existing works failed to realize the potential impossibility to induce SSE in such domains.
If the desired SSE cannot be induced, results claimed would questionably be overly optimistic. Such overoptimism is problematic in its own right and may even cause greater risks for the following reasons.
First, these results may be used in making security resource acquisition decisions, i.e., what combination of security resources need to be procured~\cite{McCarthy16}; overoptimism of SSE may cause an insufficient number or wrong types of resources to be deployed.
Second, statements made based on comparisons between the expected utility of SSE with some heuristic strategies or human-generated solutions to claim superiority of SSE strategies would be in potential jeopardy~\cite{Pita08,tsai09,Xu17}.
Third, the SSE strategy recommended may not be the optimal one, thus failing in optimizing the use of limited security resources, which is the primary mission of security games.

In this paper, we remedy the inadequacy of the SSE in security games and make the following key contributions.
1) We formalize the notion of overoptimism by defining the utility guarantee of the defender's strategies, and show with a motivating example that the utility claimed to be guaranteed by the SSE is much higher than the actually guaranteed utility.
2) Inspired by the notion of inducible strategy~\cite{von04}, we characterize the solution concept with the highest utility guarantee and call it \emph{inducible Stackelberg equilibrium} (ISE).
3) We compare ISE with SSE and show that for games with certain structures, the two concepts are equivalent, though in general cases the guaranteed utility of SSE can be arbitrarily worse than that of ISE;
in addition, introducing the ISE does not invalidate existing algorithmic results as the problem of computing an ISE polynomially reduces to that of computing an SSE.
4) We provide algorithmic implementation for computing the ISE and conduct experiments to evaluate our results; our experiments unveil the significant overoptimism and sub-optimality  of the SSE, which suggests the practical significance of the ISE solution.

\subsubsection{Other Related Works}
To the best of our knowledge, \citeauthor{Okamoto12}~\shortcite{Okamoto12} are the only exception who have raised the concern of lack of inducibility in security games, though their model is a very specific type of network security games that cannot be generalized to standard security games, especially games with scheduling constraints.
Besides that, the more important question regarding the overoptimism due to the lack of inducibility and the algorithmic remedies needed for such overoptimism were left unanswered 
(in particular, the solution algorithm proposed by \citeauthor{Okamoto12} only converges to a local optimum even only in their setting). 
These questions are addressed in the affirmative in this paper.
%
The concept of \textit{inducible target} in our paper (Definition~\ref{def:inducible_target}) is inspired by \textit{inducible strategy} first proposed by von Stengel and Zamir~\shortcite{von04} in their study of general Stackelberg games. 
However, the focus of their work was solely on characterizing the range of leader's utility in Stackelberg equilibria with the aim of confirming the advantage of commitment~\cite{von04,Stengel10}. 
%
Some other works considered potential deviation of the attacker from their optimal responses and proposed solution concepts that were robust to these deviations~\cite{Pita09,Yang14,nguyen13}.
Our work differs from this line of research in that we consider perfectly rational attackers.

\section{Preliminaries}
\subsection{Security Games with Arbitrary Schedules}
A security game is a two-player Stackelberg game played between an attacker and a defender. The defender allocates resources $R$ to protect a set of targets $T$. Let $n=|T|$. A resource $r\in R$ can be assigned to a schedule $s\subseteq T$ which covers multiple targets and is chosen from a known and constrained set $S_r\subseteq 2^T$. The attacker's pure strategy is choosing one target $t\in T$ to attack, and his mixed strategy can be represented as a vector $\mathbf{a}\in\mathcal{A}$ where $a_t$ denotes the probability of attacking $t\in T$.
The defender's pure strategy is a joint schedule $j$ which assigns each resource to at most one schedule. Let $j$ be represented as a vector $\mathbf{P}_j=\langle P_{jt}\rangle\in\{0,1\}^n$ where $P_{jt}$ indicates whether target $t$ is covered in joint schedule $j$. The set of all feasible joint schedules is denoted by $J$. The defender's mixed strategy $\mathbf{x}\in\mathcal{X}$ is a vector where $x_j$ denotes the probability of playing joint schedule $j$. Let $\mathbf{c}=\langle c_t\rangle$ be the coverage vector corresponding to $\mathbf{x}$, where $c_t=\sum_{j\in J}P_{jt}x_j$ is the marginal probability of covering $t$.

The payoffs of players are decided by the target chosen by the attacker and whether the target is protected by the defender. The defender's payoff for an uncovered attack is denoted by $U^u_d(t)$ and for a covered attack $U^c_d(t)$. Similarly, $U^u_a(t)$ and $U^c_a(t)$ are attacker's payoffs respectively. A widely adopted assumption in security games is that $U^c_d(t)>U^u_d(t)$ and $U^u_a(t)>U^c_a(t)$. In other words, covering an attack is beneficial for the defender, while hurts the attacker. Given a strategy profile $\langle\mathbf{x},\mathbf{a}\rangle$, the expected utilities for both players are
\begin{equation*}
\begin{aligned}
U_d(\mathbf{x},\mathbf{a})&=\sum\nolimits_{t\in T}a_t[c_tU^c_d(t)+(1-c_t)U^u_d(t)]\\
U_a(\mathbf{x},\mathbf{a})&=\sum\nolimits_{t\in T}a_t[c_tU^c_a(t)+(1-c_t)U^u_a(t)],
\end{aligned}
\end{equation*}
where $\mathbf{c}$ is the coverage vector corresponding to $\mathbf{x}$. Let $U_a(\mathbf{x},t)$ and $U_d(\mathbf{x},t)$ denote the expected utilities of the attacker and defender respectively when $t$ is attacked.
The illustrated security game model has a wide applicability in many security applications~\cite{Kiekintveld09,Jain10,tsai09,gan15}.
\subsection{Stackelberg Equilibria and Tie-Breaking Rules}
\label{sec:relatedwork}
In an SSG, the defender acts first by committing to a mixed strategy and the attacker moves after having observed the defender's commitment. 
The solution concept of Stackelberg games, called \emph{Stackelberg equilibrium}, captures the outcome in which the defender's strategy is optimal, under the assumption that the attacker will always respond optimally to the strategy the defender plays~\cite{leitmann78}. 
A pair of strategies  $\langle\mathbf{x}^*, f(\mathbf{x}^*) \rangle$ forms a Stackelberg equilibrium iff:
\begin{enumerate}
  \item $f:\mathcal{X}\rightarrow \mathcal{A}$ is a best response function of the attacker, that satisfies: $U_a(\mathbf{x},f(\mathbf{x}))\geq U_a(\mathbf{x},\mathbf{a})$ for all $\mathbf{x}\in\mathcal{X}$ and $\mathbf{a}\in\mathcal{A}$;
      
  \item $U_d(\mathbf{x}^*,f(\mathbf{x}^*))\geq U_d(\mathbf{x},f(\mathbf{x}))$ for all $\mathbf{x}\in\mathcal{X}$.
\end{enumerate}
A tie represents a situation where multiple best response strategies exist for the attacker. Ties are not rare corner cases, but a fundamentally recurring situation in security games. To achieve maximal usage of defense resources, algorithms avoid allocating too many or too few resources to each target, and in most cases generate a tied solution~\cite{Paruchuri08,Kiekintveld09}. Thus, a tie-breaking rule -- how the attacker breaks ties -- plays a central role in security games and is exploited to design efficient algorithms, such as ORIGAMI~\cite{Kiekintveld09}. Different tie-breaking rules lead to different Stackelberg equilibria. The \emph{strong Stackelberg Equilibrium} (SSE) and the \emph{weak Stackelberg Equilibrium} (WSE) are two prevailing solution concepts, defined respectively with the optimistic and pessimistic assumptions of the attacker's tie-breaking behavior:
\begin{itemize}
  \item \textbf{SSE}: $f^{S}(\mathbf{x})\in\arg\max_{t\in\Gamma(\mathbf{x})}U_d(\mathbf{x},t)$ for every $\mathbf{x}$;

  \item \textbf{WSE}: $f^{W}(\mathbf{x})\in\arg\min_{t\in\Gamma(\mathbf{x})}U_d(\mathbf{x},t)$ for every $\mathbf{x}$;
\end{itemize}
where $\Gamma(\mathbf{x})=\arg\max_{t\in T}U_a(\mathbf{x},t)$ is the \emph{attack set}, the set of all best response pure strategies (targets) for attacker. In words, the attacker breaks ties in favor of the defender in the SSE, while against the defender in WSE.

\subsubsection{WSE and SSE}
WSE follows the spirit of maximin solution~\cite{sandholm15}, which provides the defender a guaranteed value in the sense that if the attacker breaks ties in a different manner, the defender does not gain less.
The SSE, however, does not provide such a value guarantee.
Despite this, the security game literature has adopted SSE instead of WSE primarily because a WSE may not exist \cite{Conitzer06}.
{In addition, the counter-intuitive assumption that the attacker breaks ties in favor of the defender is justified by the assertion that the desired outcome can \textit{often} be induced by playing a strategy arbitrarily close to the SSE strategy.}
%
Kiekintveld et al.~\shortcite{Kiekintveld09} were the {first} who explicitly made such a claim in the security game domain, following the analysis for generic Stackelberg games \cite{von04}.
%
%
{Since then, despite a lack of systematic research}, the claim has been commonly used to support the SSE in security games of various types, including games with {scheduling constraints} (e.g., \cite{Jain10,Varakantham13,gan15}).
{The idea of SSE is also integrated in real world systems such as the ARMOR deployed at LAX~\cite{Pita08}, and IRIS for the Federal Air Marshal Services~\cite{tsai09}.}
To see what can go wrong with the SSE assumption, we provide a concrete example in the next section.
\section{Motivating Example}
\label{sec:motivatingexample}
Consider an instance shown in the following figure where $T=\{t_1,t_2,t_3,t_4\}$. The defender has one resource $R=\{r\}$.
We first consider the scenario without resource assignment constraints, which has a unique SSE with coverage $\mathbf{c}=\langle\frac{4}{15},\frac{1}{5},\frac{4}{15},\frac{4}{15}\rangle$. In SSE, the attacker will break the tie $\Gamma(\mathbf{c})=T$ by attacking $t_2$.
This can be induced by decreasing the coverage on $t_2$ with infinitesimal amount and increasing the coverage on other targets, making $t_2$ be strictly preferred.
\begin{wrapfigure}{l}{35mm}
  \begin{center}
    \includegraphics[width = 40mm]{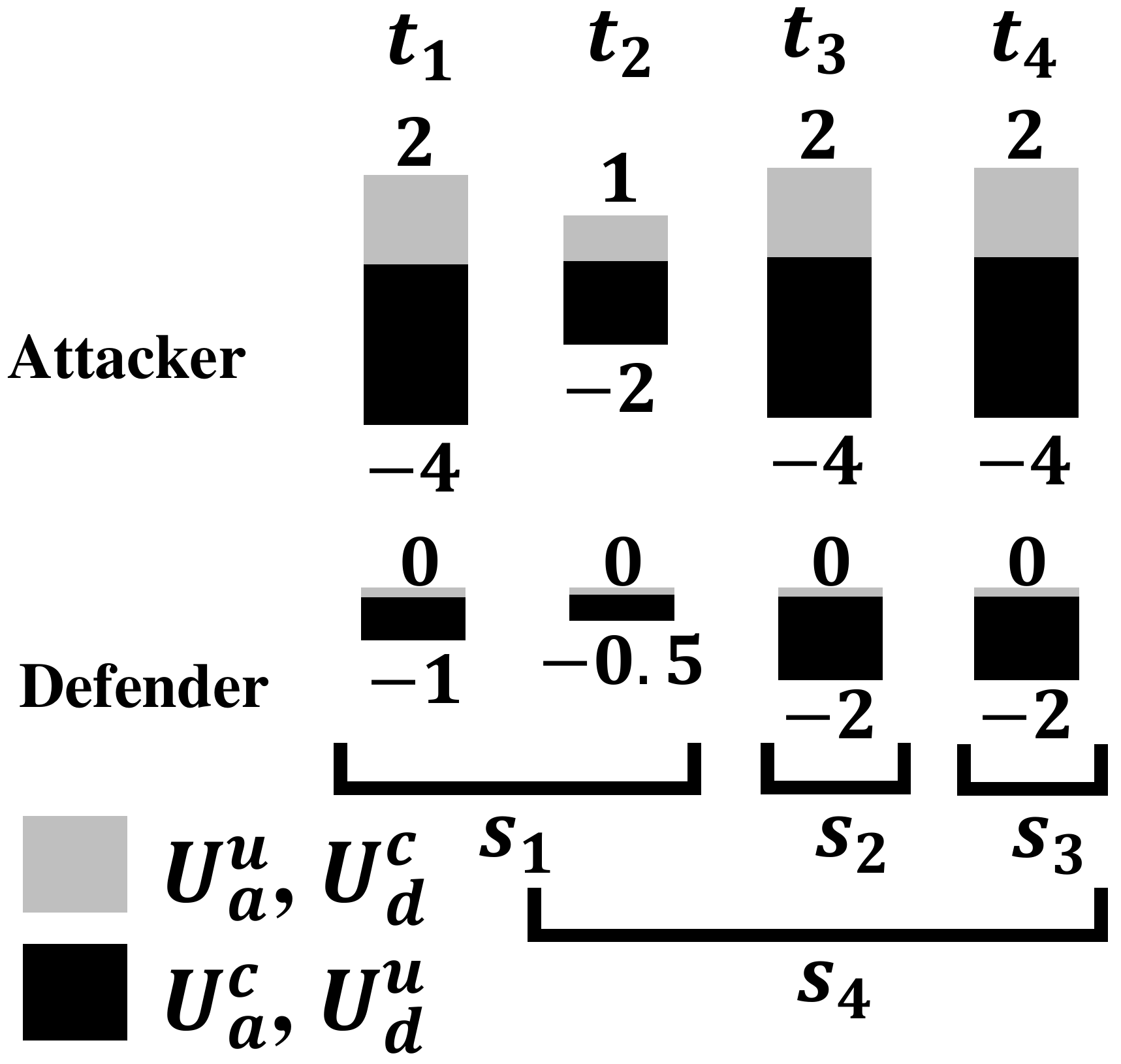}
  \end{center}
\end{wrapfigure}
However, with resource assignment constraints, the defender cannot decrease the coverage on one target arbitrarily while simultaneously not decreasing coverage on all other targets. Suppose joint schedules $J=\{s_1,s_2,s_3,s_4\}$ as shown in the figure. (There is only one resource.) The game still has a unique SSE where the defender plays $\mathbf{x}=\langle\frac{1}{3},\frac{1}{3},\frac{1}{3},0\rangle$ and the attacker is assumed to attack $t_2$, bringing the defender an expected utility of $-\frac{1}{3}$. Such outcome is explicitly or implicitly considered with previous mentioned infinitesimal strategy deviation in security game
literature~\cite{Jain10}. Unfortunately, there exists no strategy arbitrarily close to $\mathbf{x}$ which makes $t_2$ be strictly preferred by the attacker. If $x_1$ is decreased, the attacker will prefer $t_1$ over $t_2$; otherwise $t_3$ or $t_4$ will be attacked. Thus, any infinitesimal strategy deviation will cause the attacker to attack $t_1$, $t_3$ or $t_4$. The best induced outcome for the defender is only approaching $-\frac{2}{3}$, achieved by decreasing $x_1$ with infinitesimal amount and the attacker is induced to attack $t_1$.

Can the defender do better than $-\frac{2}{3}$? The answer is yes. Consider the mixed strategy $\langle\frac{1}{2},0,0,\frac{1}{2}\rangle$. The attack set is $\{t_1,t_3,t_4\}$ and the defender can induce the attacker to strictly prefer $t_1$ by playing $\langle\frac{1}{2}-\delta,0,0,\frac{1}{2}+\delta\rangle$ with infinitesimal $\delta$. By doing this, the defender can guarantee an expected utility arbitrarily close to $-0.5$, better than $-\frac{2}{3}$. In fact, this is the best outcome that the defender can achieve with infinitesimal strategy deviation. Such optimal outcome is captured by the solution concept called \emph{inducible Stackelberg equilibrium} (ISE), proposed in the following section.
\section{Inducible Stackelberg Equilibrium}

The above example reveals a failure of the attempt to induce the desired SSE outcome by playing a strategy arbitrarily close to the SSE strategy. It is natural to ask: Given any strategy $\mathbf{x}$, what is the best outcome inducible by playing strategies arbitrarily close to $\mathbf{x}$? Associated with such best outcome, which strategy is optimal? To answer these questions, inspired by the ``pessimistic" view of the leader's payoff in Stackelberg games~\cite{von04}, we define the \emph{utility guarantee} of a defender strategy as the supremum of the worst-case expected utility that can be achieved by playing a strategy arbitrarily close to the measured one.

\begin{Definition}[\textbf{Utility Guarantee}]\label{def:implementable}
The utility guarantee of a defender strategy $\mathbf{x}$ is defined as
\begin{equation}\label{eq:ug}
U^I(\mathbf{x})=\limsup_{\mathbf{x}'\rightarrow\mathbf{x}}\underset{t\in\Gamma(\mathbf{x}')}{\min}U_d(\mathbf{x}',t)
\end{equation}
\end{Definition}

The utility guarantee is well-defined since the limit superior always exists. It measures the inducibility of a defender strategy: $U^I(\mathbf{x})$ is the optimal outcome at $\mathbf{x}$ that is inducible via infinitesimal strategy deviation. The aforementioned assumption widely acknowledged in security games falsely claim that any SSE strategy $\mathbf{x}$ provides utility guarantee $U_d(\mathbf{x},f^S(\mathbf{x}))$, i.e., $U^I(\mathbf{x}) = U_d(\mathbf{x},f^S(\mathbf{x}))$. Therefore, we need to find the optimal strategy with respect to the utility guarantee. We notice that the optimal utility guarantee coincides with the ``pessimistic" leader's payoff~\cite{von04} as follows:
\begin{equation}
\max_{\mathbf{x}\in\mathcal{X}}U^I(\mathbf{x})=\underset{\mathbf{x}\in\mathcal{X}}{\sup}\min_{t\in\Gamma(\mathbf{x})}U_d(\mathbf{x},t)
\end{equation}
Inspired by the analysis of ``pessimistic" leader's payoff~\cite{von04}, we introduce several useful  notions for defining ISE. The first is \emph{inducible target}.
\begin{Definition}[\textbf{Inducible Target}]\label{def:inducible_target}
A target $t$ is inducible iff there exists at least one defender mixed strategy $\mathbf{x}\in\mathcal{X}$ such that $\Gamma(\mathbf{x}) = \{t\}$.
\end{Definition}
Inducible target offers the defender a lower bound on the utility guarantee as $U^I(\mathbf{x})\geq U_d(\mathbf{x},t)$ holds for any inducible target $t$ in $\Gamma(\mathbf{x})$. The intuition is as follows. Since $t$ is inducible, there exists $\mathbf{x}'$ against which $t$ is the unique best response for attacker. Thus, from $\mathbf{x}$, we can always play $(1-\alpha)\mathbf{x}+\alpha\mathbf{x}'$ with $\alpha\rightarrow0$ which always makes $t$ a unique best response as long as $\alpha>0$. Then it is easy to verify that the supremum in~\eqref{eq:ug} is always at least $U_d((1-\alpha)\mathbf{x}+\alpha\mathbf{x}',t)$, of which the limit is $U_d(\mathbf{x},t)$.

The concept of inducible target is insufficient to fully characterize the utility guarantee of a strategy because a pair of targets might be indistinguishable from the attacker's perspective as they always bring the attacker the same utility irrespective of the strategy the defender plays. Such targets are called \emph{identical targets}.
\begin{Definition}[\textbf{Identical Target}]\label{def:identical}
A pair of targets $t$ and $t'$ are identical iff $U_a(\mathbf{x},t)=U_a(\mathbf{x},t')$ for any $\mathbf{x}\in\mathcal{X}$.
\end{Definition}
Identical targets are non-inducible by Definition 2. However, it is possible that the optimal utility guarantee in~\eqref{eq:ug} is achieved via infinitesimal strategy deviation that induces a group of identical targets to be ``unique" best responses. Therefore, a more generalized notion, \emph{inducible element}, is defined to capture this special case. We begin with defining an \emph{element}.
\begin{Definition}[\textbf{Element}]\label{def:element}
An element is a set of targets in which: i) every pair of targets are identical, and ii) no target is identical to any target not in it.
\end{Definition}
The reason that we call it an element is as follows. First, from the attacker's perspective, the \emph{element} is the generalization on \emph{target} as it characterizes the extend to which the attacker can distinguish from the perspective of payoffs. Second, with mild assumption which often holds true in practice, one can easily verify that two targets are identical iff they have same payoffs for attacker and they are covered by the same set of schedules. Thus it is easy to enumerate all possible elements. Let $\{t\}$ be a singleton element if no target in $T$ is identical to $t$. The inducible element extends the concept of inducible targets as follows.
\begin{Definition}[\textbf{Inducible Element}]\label{def:inducible_element}
An element $e$ is inducible iff there exists at least one defender mixed strategy $\mathbf{x}\in\mathcal{X}$ such that $\Gamma(\mathbf{x}) = e$.
\end{Definition}
The observation that inducible target offers a lower bound to utility guarantee extends to inducible element. To show this, we first define the utility function in an element-based manner. For a defender strategy $\mathbf{x}$, we define $\tilde{U}_d(\mathbf{x},e)$ and $\tilde{U}_a(\mathbf{x},e)$ as follows
\begin{equation}\label{eq:element_utility}
\begin{aligned}
&\tilde{U}_d(\mathbf{x},e)=\min_{t\in e}U_d(\mathbf{x},t)\\
&\tilde{U}_a(\mathbf{x},e)=U_a(\mathbf{x},t)\quad\forall t\in e.
\end{aligned}
\end{equation}
One key observation here is that, if $e$ is inducible, $\tilde{U}_d(\mathbf{x},e)$ lower bounds $U^I(\mathbf{x})$. This follows the similar explanation with inducible targets. Since $e$ is inducible, there exists $\mathbf{x}'$ such that $\Gamma(\mathbf{x}')=e$ by definition and we can ``perturb" $\mathbf{x}$ towards $\mathbf{x}'$ with infinitesimal amount and the attack set becomes exactly $e$. The observation follows as we notice that $\tilde{U}_d(\mathbf{x},e)$ is a smooth function and thus the change on it with infinitesimal deviation on $\mathbf{x}$ is bounded.

With singleton element defined, the target set $T$ is partitioned into a disjoint element set $\mathcal{E}$. It is easy to see that, for any defender strategy $\mathbf{x}$, the attack set $\Gamma(\mathbf{x})$ is always a union of some elements in $\mathcal{E}$. Thus, we define $\tilde{\Gamma}(\mathbf{x})=\{e\in\mathcal{E}\mid e\subseteq\Gamma(\mathbf{x})\}$, and one can always verify that $\Gamma(\mathbf{x})=\bigcup_{e\in\tilde{\Gamma}(\mathbf{x})}e$. $\tilde{\Gamma}(\mathbf{x})$ can be interpreted as an ``attack set" consisting of elements, instead of targets. Let $\mathcal{E}^I\subseteq\mathcal{E}$ denote the set of inducible elements. The utility guarantee is actually decided by the inducible elements as presented in the following equation. 
\begin{equation}\label{eq:ugie}
U^I(\mathbf{x})=\max_{e\in\tilde{\Gamma}(\mathbf{x})\cap\mathcal{E}^I}\tilde{U}_d(\mathbf{x},e)
\end{equation}
The correctness of this equation formally follows the analysis of ``pessimistic" leader's payoff by~\citeauthor{von04}~\shortcite{von04}, and here we provide an intuitive explanation. Since the players' utility functions are smooth, with infinitesimal strategy deviation, $U_d(\mathbf{x},t)$ and $U_a(\mathbf{x},t)$ can be regarded as unchanged for any $t$. Besides, with infinitesimal strategy deviation, it is only possible for defender to ``remove" some target out of the attack set, while it is unable for the defender to add a new target into the attack set given the non-zero gap between attacker's utilities between targets inside and outside the attack set respectively. Thus, since the inducible outcome $U^I(\mathbf{x})$ is defined on the worst tie-breaking rule, i.e., $\underset{t\in\Gamma(\mathbf{x}')}{\min}U_d(\mathbf{x}',t)$ in~\eqref{eq:ug}, and infinitesimal strategy deviation won't change the defender's utility on any target, the defender always has an intention to reduce the attack set via infinitesimal strategy deviation. It is then noticed that, any inducible element can be the attack set itself with infinitesimal strategy deviation as we shown before, while any element, that is not inducible, cannot be the unique best response element. Besides, the definition of element determines that if one target from element $e$ is in the attack set, so as all targets from $e$. Thus, the defender can only get $\min_{t\in e}U_d(\mathbf{x},t)$ under worst-case tie-breaking rule, when $e\in\tilde{\Gamma}(\mathbf{x})\cap\mathcal{E}^I$ becomes the unique best response element with infinitesimal strategy deviation. Equation~\eqref{eq:ugie} then follows.

To this end, we successfully characterize the inducible outcome with well-defined concept of inducible elements, and we are ready to define the concept of inducible Stackelberg equilibrium, which straightforwardly follows the previous analysis
\begin{Definition}[\textbf{ISE}]\label{def:ise}
A pair of strategies $\langle\mathbf{x}^*,  f^{I}(\mathbf{x}^*) \rangle$ forms an ISE if the following holds:
\begin{enumerate}
\item $\mathbf{x}^*\in\arg\max\limits_{\mathbf{x}\in\mathcal{X}}U^I(\mathbf{x})$;
\item $f^I(\mathbf{x})\in\arg\min\limits_{t\in e(\mathbf{x})}U_d(\mathbf{x},t)$ where $e(\mathbf{x})\in\arg\max\limits_{e\in\tilde{\Gamma}(\mathbf{x})\cap\mathcal{E}^I}\tilde{U}_d(\mathbf{x},e)$.
\end{enumerate}
\end{Definition}
Tie-breaking rule $f^I$ partially shares the property of $f^S$ as the attacker breaks the ties of elements in favor of the defender. Meanwhile, it behaves as $f^W$ when the attacker breaks the ties of targets from the same element. Notice that ISE successfully addresses the inducibility issue of SSE, and always exists by its definition. In the next section, we conduct extensive analysis to compare ISE with SSE.
\section{ISE vs. SSE}
In this section, we formally show that when \emph{Subsets of Schedules Are Schedules} (SSAS)~\cite{Korzhyk11} is satisfied, ISE and SSE are equivalent under mild assumption. However, in general cases the utility guarantee of SSE can be much worse than that of ISE, and we present one such example.

Formally, SSAS states that $2^{s} \subseteq S$ for all $s\in S$.
This can happen, for example, when the defender can choose to bypass arbitrary targets on their patrol route.


\begin{Theorem}\label{theorem:SSAS}
If SSAS is satisfied, every SSE $\langle\mathbf{x}^*, t^*\rangle$ such that $c_{t^*}>0$ is also an ISE.
\end{Theorem}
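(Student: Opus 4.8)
The plan is to show that if $\langle\mathbf{x}^*, t^*\rangle$ is an SSE with $c_{t^*} > 0$, then under SSAS the target $t^*$ is in fact an inducible element, and moreover that $U^I(\mathbf{x}^*) = U_d(\mathbf{x}^*, t^*) = \max_{\mathbf{x}\in\mathcal{X}} U^I(\mathbf{x})$, so that $\langle\mathbf{x}^*, t^*\rangle$ satisfies both conditions of Definition~\ref{def:ise}. The two things that need to be established are: (i) $t^*$ is inducible (equivalently, the singleton $\{t^*\}$ is an inducible element, which in particular requires that $t^*$ has no identical target under SSAS), and (ii) the SSE value is the optimal utility guarantee.

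First I would handle inducibility of $t^*$. The key structural fact is that under SSAS, starting from the SSE coverage vector $\mathbf{c}^*$ with $c_{t^*} > 0$, the defender can shave off an infinitesimal amount of coverage from $t^*$ alone: since any subset of a schedule is a schedule, there is a feasible perturbation that decreases $c_{t^*}$ by $\delta$ while leaving every other $c_t$ unchanged (take the joint schedules in the support of $\mathbf{x}^*$ that cover $t^*$ and, with total probability $\delta$, replace them by the same schedules with $t^*$ removed). Decreasing $c_{t^*}$ strictly increases $U_a(\mathbf{x}, t^*)$ while not changing $U_a(\mathbf{x}, t)$ for $t \neq t^*$; since $t^* \in \Gamma(\mathbf{x}^*)$, after this perturbation $t^*$ becomes the \emph{unique} best response, i.e.\ $\Gamma = \{t^*\}$. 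Hence $t^*$ is an inducible target, and as a byproduct $t^*$ has no identical target (identical targets cannot be separated by any strategy), so $\{t^*\}$ is a genuine singleton element and $\{t^*\}\in\mathcal{E}^I$.

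Next I would argue optimality of the utility guarantee. From the inducibility just shown and the lower-bound observation preceding Eq.~\eqref{eq:ugie}, $U^I(\mathbf{x}^*) \geq U_d(\mathbf{x}^*, t^*)$; and since $\{t^*\}$ is a singleton element, Eq.~\eqref{eq:ugie} gives $U^I(\mathbf{x}^*) = \max_{e \in \tilde\Gamma(\mathbf{x}^*)\cap\mathcal{E}^I} \tilde U_d(\mathbf{x}^*, e) \geq \tilde U_d(\mathbf{x}^*, \{t^*\}) = U_d(\mathbf{x}^*, t^*)$. For the reverse, note $\max_{\mathbf{x}} U^I(\mathbf{x}) \leq \max_{\mathbf{x}} \sup_{\mathbf{x}'\to\mathbf{x}} \min_{t\in\Gamma(\mathbf{x}')} U_d(\mathbf{x}',t) \leq \max_{\mathbf{x}} \max_{t\in\Gamma(\mathbf{x})} U_d(\mathbf{x},t)$, and the right-hand side is exactly the SSE value $U_d(\mathbf{x}^*, t^*)$ by definition of SSE (the SSE maximizes $U_d(\mathbf{x}, f^S(\mathbf{x})) = \max_{t\in\Gamma(\mathbf{x})}U_d(\mathbf{x},t)$); one should check that the $\limsup$ in \eqref{eq:ug} cannot exceed $\max_{t\in\Gamma(\mathbf{x})}U_d(\mathbf{x},t)$ because as $\mathbf{x}'\to\mathbf{x}$ the attack set $\Gamma(\mathbf{x}')$ eventually shrinks into $\Gamma(\mathbf{x})$ (a target strictly outside $\Gamma(\mathbf{x})$ stays out under small perturbations, by smoothness and the strict utility gap), and utilities vary continuously. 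Combining, $U^I(\mathbf{x}^*) = U_d(\mathbf{x}^*, t^*) = \max_{\mathbf{x}} U^I(\mathbf{x})$, so condition 1 of Definition~\ref{def:ise} holds; and $e(\mathbf{x}^*) = \{t^*\}$ attains the max in \eqref{eq:ugie} with $f^I(\mathbf{x}^*) = t^*$, so condition 2 holds as well.

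I expect the main obstacle to be the SSAS-based perturbation argument in the first step: one must be careful that the perturbation of the \emph{joint schedules} (not just the marginal coverage vector) is feasible, i.e.\ that removing $t^*$ from the schedules covering it yields joint schedules still in $J$ — this is precisely where $2^s\subseteq S$ is used, and it may require spelling out how a joint schedule decomposes into per-resource schedules. A secondary subtlety is the direction $U^I(\mathbf{x}^*) \le U_d(\mathbf{x}^*,t^*)$, which relies on the fact that the $\limsup$ cannot be inflated by sequences $\mathbf{x}'\to\mathbf{x}^*$ along which the attack set fails to contain any high-defender-utility target; this is the "cannot add a new target to the attack set" phenomenon already invoked informally after \eqref{eq:ugie}, so it should be quotable, but it is worth stating cleanly.
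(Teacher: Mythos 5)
Your proposal is correct and follows essentially the same route as the paper: the core step in both is the SSAS-based perturbation that removes $t^*$ from the schedules covering it (feasible because $2^s\subseteq S$ and $c_{t^*}>0$), strictly lowering $c_{t^*}$ while fixing all other coverages so that $t^*$ becomes the unique best response, hence inducible and non-identical to any other target. The paper's proof actually stops there and leaves the optimality-of-utility-guarantee step implicit, whereas you spell it out via the lower bound $U^I(\mathbf{x}^*)\ge U_d(\mathbf{x}^*,t^*)$ and the upper bound $\max_{\mathbf{x}}U^I(\mathbf{x})\le\max_{\mathbf{x}}\max_{t\in\Gamma(\mathbf{x})}U_d(\mathbf{x},t)$; that is a welcome completion, not a different approach.
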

\begin{proof}
It is easy to see that when SSAS is satisfied, no pair of targets are identical, as every target $t$ covered by $s$ is uniquely covered by $s'=\{t\}$. Therefore, $\mathcal{E}=\{\{t\}\mid\forall t\in T\}$. We then show that each singleton element $\{t\}$ is inducible. In other words, $t$ is an inducible target.
If $\Gamma (\mathbf{x}^*)$ contains only $t^*$, then $t^*$ is inducible.
If $\Gamma (\mathbf{x}^*)$ also contains other targets, since SSAS is satisfied, we can construct a defender strategy $\mathbf{x}'$ such that $x'_j = x_j^*$ for all supporting strategies $j$ that does not contain $t^*$, and $x'_{j\setminus\{s\}} = x_j^*$ for all other $j$ that contains $s$.
Thus, the corresponding coverage $c'_{t^*}$ strictly decreases while the coverage of other targets remain the same.
As a result, the attacker will strictly prefers to attack $t^*$, so $t^*$ is inducible.
%
\end{proof}

Under SSAS, the set of SSE strategies is also a subset of NE strategies~\cite{Korzhyk11}.
This suggests the relationship between SSE, ISE and NE strategies illustrated in Figure~\ref{fig:relationship:ISEvsSSEvsNE}.
Notice that security games without schedules can be seen as ones with singleton schedules $S=T$, so that SSAS is satisfied trivially.
Although SSAS is valid in many real scenarios, it is risky to regard it as being ubiquitous.
For example, in the presence of {protection externalities} \cite{gan15,gan17security}, the effect that a defense resource might protect a set of targets within a certain radius can hardly be confined to a specific subset; in FAMS tasks \cite{tsai09,Jain10}, when air marshals are allocated to a row of connected flights, it is unrealistic to make them ``jump'' over only a subset of the schedule.
Our example below shows that in general security games, SSE can be arbitrarily worse than ISE in terms of the utility guarantee.
\begin{figure}
  \centering
  \subfigure[SSE, ISE and NE]{
    \label{fig:relationship:ISEvsSSEvsNE} 
    \includegraphics[height=30mm]{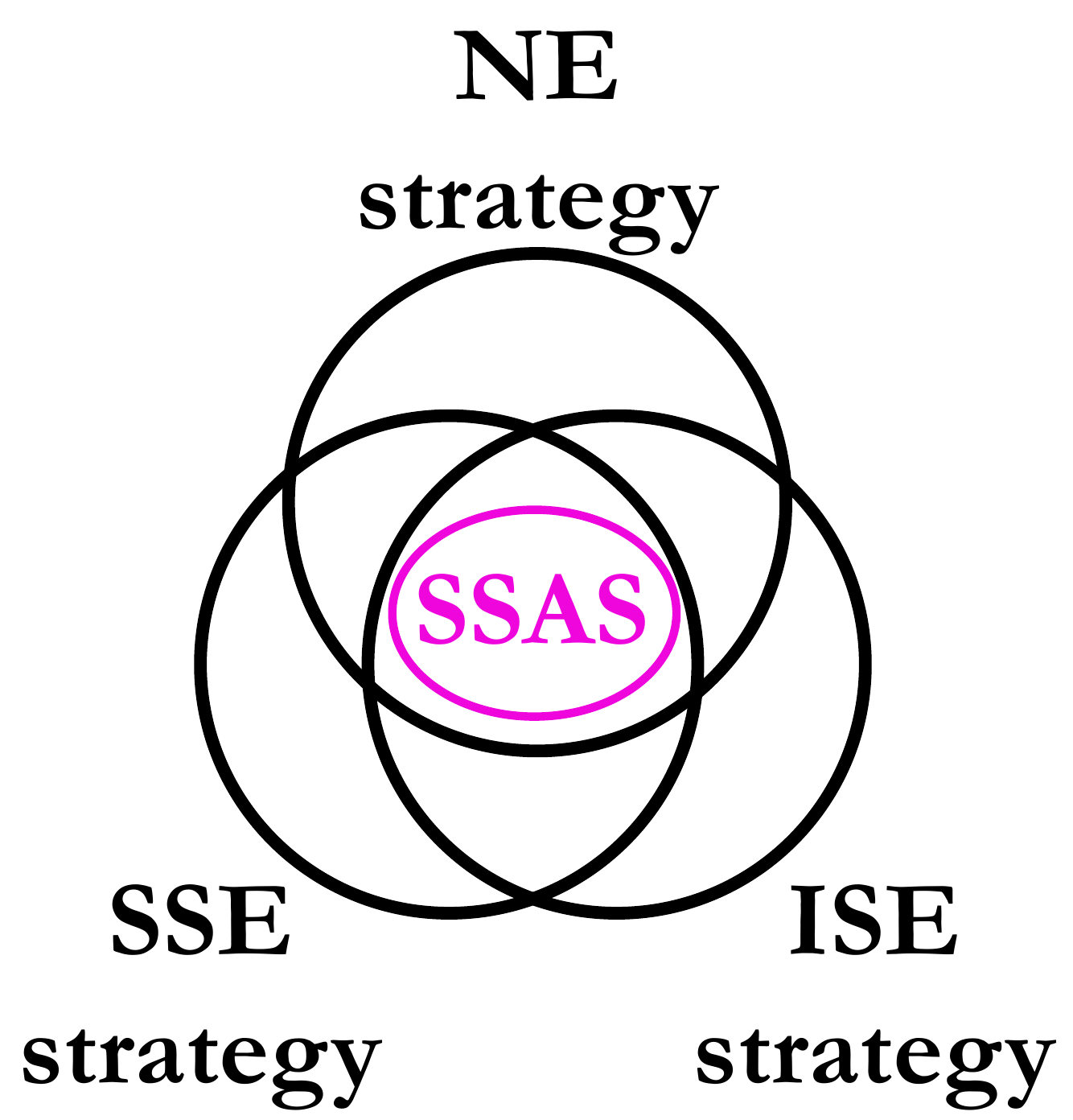}}
\subfigure[SSE can be arbitrarily worse]{
    \label{fig:relationship:ISEvsSSE} 
    \includegraphics[height=30mm]{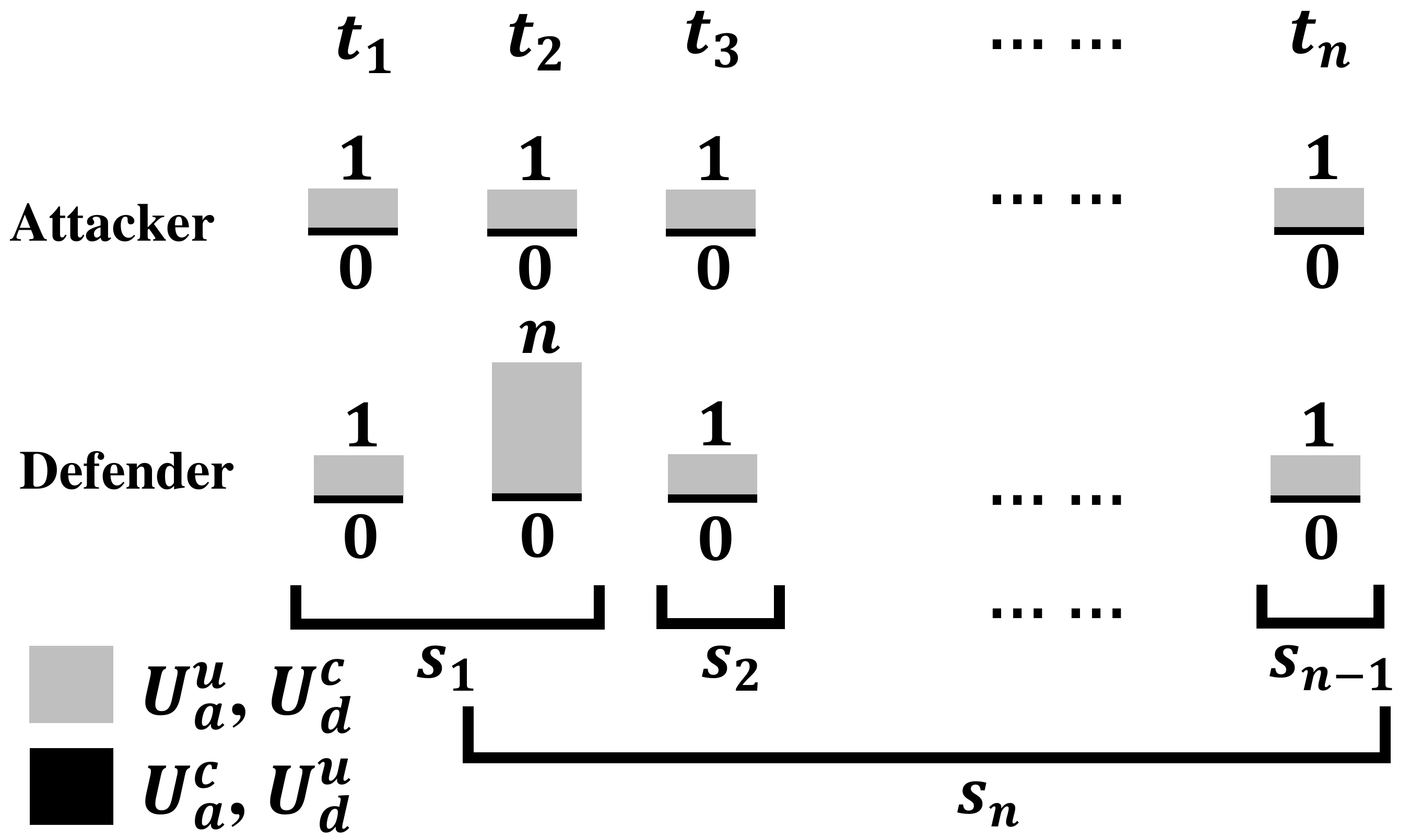}}
\caption{ISE vs. SSE.}\label{fig:relationship}
\end{figure}


\smallskip
\noindent
\textbf{Example1. }
The example is shown in Figure~\ref{fig:relationship:ISEvsSSE}.
The defender has only one resource. One can verify, the SSE strategy uniformly allocates this resource on $s_1,...,s_{n-1}$ in order to make $t_2$ be in the attack set. Unfortunately, $t_2$ is not inducible since $t_2$ is weakly dominated by $t_1$ for the attacker.
Therefore, $U^I_d(\mathbf{x})=\frac{1}{n-1}$. On the other hand, the ISE strategy uniformly assigns the resource on $s_1$ and $s_n$ which together cover all the targets and $U^I_d(\mathbf{x}')=\frac{1}{2}$.

\noindent
\textbf{Example 2.}
Notice that, in previous examples shown in the paper, a lot of targets have equal payoffs. However, this is only for the convenience of exposition. It is possible that ISE is not SSE when all targets have unequal payoffs. For example, there are 4 targets $t_1$, $t_2$, $t_3$, and $t_4$. Payoffs for the attacker on a successful attack are 1, 2, 3, and 4, respectively, and on an unsuccessful attack are -1, -2, -3, and -8 respectively. Payoffs for the defender on preventing an attack are 1, 100, 2, and 30 respectively and for failing to cover the attacked target are -1, 0, -2, and -3 respectively. There are two schedules $s_1=\{t_1,t_2,t_3\}$ and $s_2=\{t_4\}$, and one resource available to the defender. We can easily verify that SSE strategy is $\mathbf{x}^{SSE}=\langle0.5,0.5\rangle$ and attacker is assumed to attack $t_2$. However $t_2$ is not inducible, and ISE strategy is $\mathbf{x}^{ISE}=\langle9/14,5/14\rangle$ and attacker is induced to attack $t_4$.
\section{Computing an ISE}
We have shown that ISE mitigates the inducibility risk of SSE which can cause extremely worse performance in utility guarantee.
In this section, we further our discussion and show that, from a computational perspective, ISE does not complicate existing solution concepts as the problem of computing an ISE polynomially reduces to that of computing an SSE on the same class of schedules.
In addition to the theoretical result, a practical approach is also presented to compute an ISE.

\subsection{\mbox{A Polynomial-time Reduction to Computing an SSE}}

We start by defining the \emph{feasibility} of a target.
We say a target is \emph{feasible} if there exists $\mathbf{x} \in \mathcal{X}$ such that $t\in \Gamma(\mathbf{x})$.
We will henceforth refer to the problem of deciding if a target is feasible or not, the \emph{feasibility problem}; and of deciding if a target is inducible or not, the \emph{inducibility problem}.
The feasibility and inducibility of an element follow the similar definitions. We first restrict the investigation to games without identical targets. The reduction is presented in Theorem~\ref{thm:ISEtoSSE}, where a series of feasibility checks are incorporated as sub-procedures.

\begin{Lemma}\label{lmm:idc2fsb}
For any target $t$ in security games, the inducibility problem reduces to the feasibility problem on games with the same class of schedules in polynomial time.
\end{Lemma}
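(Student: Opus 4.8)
The plan is to reduce the inducibility problem for a target $t$ to a sequence of feasibility queries on games obtained by restricting the strategy space. The key observation is that $t$ being inducible means there is a defender strategy $\mathbf{x}$ with $\Gamma(\mathbf{x}) = \{t\}$, i.e., $t$ is feasible \emph{and} every other target $t' \neq t$ can be simultaneously forced out of the attack set. Equivalently, $t$ is inducible iff $t$ is feasible and, for the ``modified'' game in which we additionally require $U_a(\mathbf{x},t) \geq U_a(\mathbf{x},t') + \epsilon$ for all $t' \neq t$ (for some $\epsilon > 0$), the feasibility region is nonempty. So the first step is to make this reduction precise: I would show that ``$t$ is inducible'' is equivalent to ``the feasibility problem has a positive answer for $t$ in a game whose feasible-strategy polytope is $\{\mathbf{x} \in \mathcal{X} : U_a(\mathbf{x},t) > U_a(\mathbf{x},t')\ \forall t' \neq t\}$.'' The subtlety is that this restricted region is an open set, so ``$\Gamma(\mathbf{x}) = \{t\}$'' is not literally a feasibility instance of the same form; I would handle this by arguing that the open region is nonempty iff a slightly shrunk closed region $\{\mathbf{x} : U_a(\mathbf{x},t) \geq U_a(\mathbf{x},t') + \epsilon\}$ is nonempty for some rational $\epsilon > 0$ that can be bounded below in terms of the bit-complexity of the payoff data, using standard polyhedral/LP arguments (the vertices of the relevant polytope have polynomially bounded encoding).

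Next I would verify that each such restricted game is itself a legitimate security game instance on the same class of schedules, or at least that its feasibility problem is polynomially equivalent to a feasibility problem on the original class. The point is that adding linear constraints of the form $U_a(\mathbf{x},t) \geq U_a(\mathbf{x},t') + \epsilon$ to the coverage polytope does not change the joint-schedule structure $J$ — it only changes which coverage vectors $\mathbf{c}$ (equivalently, which $\mathbf{x}$) we allow. Since the feasibility problem is ``does there exist $\mathbf{x}\in\mathcal{X}$ with $t\in\Gamma(\mathbf{x})$,'' and $\Gamma$ is determined by the same attacker payoff functions, I can encode the extra constraints by intersecting with halfspaces; the feasibility check then becomes: is there $\mathbf{x}$ in this intersected polytope such that $t$ is still a best response? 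But once we have forced $U_a(\mathbf{x},t) > U_a(\mathbf{x},t')$ strictly for all $t'$, $t$ is automatically \emph{the} best response, so feasibility in the restricted game is just nonemptiness of the intersected polytope — which is a single linear program over $J$, solvable via the same separation/column-generation oracle that the original class of schedules admits.

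Putting it together: the reduction is (i) check feasibility of $t$ in the original game; if infeasible, $t$ is not inducible; (ii) otherwise, solve the LP ``$\exists \mathbf{x}\in\mathcal{X}$ with $U_a(\mathbf{x},t)\geq U_a(\mathbf{x},t')+\epsilon$ for all $t'\neq t$'' for an appropriately small rational $\epsilon$; $t$ is inducible iff this LP is feasible. Each step is a polynomial number of feasibility-type queries (or LPs over the same schedule polytope), so the whole procedure runs in polynomial time given a feasibility oracle for the class. The main obstacle I anticipate is the open-versus-closed gap in step (i)–(ii): I need a clean argument that strict simultaneous separation of $t$ from all other targets is achievable iff a uniform additive margin $\epsilon$ is achievable, with $\epsilon$ bounded below polynomially. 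This is where I would invoke the fact that the feasible polytope is defined by a polynomial-size system (in $J$, via the marginal coverage representation of \citeauthor{Kiekintveld09}), so its vertices and facets have polynomially bounded bit-length, and hence any nonempty open face contains a point at distance $\geq 2^{-\mathrm{poly}}$ from the boundary; choosing $\epsilon$ below this bound makes the closed relaxation faithful. A secondary, more mechanical point is confirming that weakly-dominated targets (like $t_2$ in Example~1) are correctly flagged as non-inducible: for such a $t$ no $\mathbf{x}$ can strictly beat the dominating target, so step (ii)'s LP is infeasible for every $\epsilon>0$, exactly as required.
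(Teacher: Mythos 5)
Your proposal is correct and follows essentially the same route as the paper: show that strict separation of $t$ from all other targets is achievable iff it is achievable with a uniform additive margin $\epsilon$ whose bit-length is polynomial in the input (via bounded vertex complexity of the relevant polytope), then absorb $\epsilon$ into the attacker's payoffs to obtain an ordinary weak-inequality feasibility instance on the same class of schedules. One caution: the vertex-complexity bound should be established for $\mathcal{C}\cap\{\mathbf{c}: U_a(\mathbf{c},t)\ge U_a(\mathbf{c},t')\ \forall t'\neq t\}$ directly (as the paper does, via Carath\'eodory over the $0/1$ pure-coverage vertices and Cramer's rule), not via the marginal-coverage relaxation you cite, since that relaxation does not describe the implementable coverage set exactly when schedules are present.
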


\begin{proof}[Proof sketch]
The intuition behind Lemma~\ref{lmm:idc2fsb} is the observation that whenever $U_a(\mathbf{c},t) > U_a(\mathbf{c},t')$ for all $t'\neq t$, there is a lower bound $\delta$ of the gap, such that $U_a(\mathbf{c},t) - U_a(\mathbf{c},t') \ge \delta$ for all $t'\neq t$, and $\log{\frac{1}{\delta}}$ is bounded by a polynomial in the input size. Blending $\delta$ into the payoffs, we construct a new game such that $t$ is inducible in original game if and only if $t$ is feasible in the new constructed game.
\end{proof}

\begin{Theorem}
\label{thm:ISEtoSSE}
    The problem of computing an ISE reduces to the problem of computing an SSE of games with the same class of schedules in polynomial time.
\end{Theorem}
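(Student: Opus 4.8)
The plan is to give a polynomial-time Turing reduction that computes an ISE using polynomially many calls to an SSE oracle on games over the same schedule class, plus polynomial side computation. By Definition~\ref{def:ise} and Equation~\eqref{eq:ugie}, an ISE is recovered once we (a)~know the set $\mathcal{E}^I$ of inducible elements, (b)~have a strategy $\mathbf{x}^*$ maximizing $U^I$, and (c)~read off $f^I(\mathbf{x}^*)$; step~(c) is a direct computation from $\mathbf{x}^*$ and $\mathcal{E}^I$, so the real work is (a) and (b). For (a), I would first extend Lemma~\ref{lmm:idc2fsb} from targets to elements --- the gap-and-rescale argument carries over verbatim if one treats an element as a single ``super-target'' carrying the common attacker utility --- and then reduce the feasibility problem itself to computing an SSE: given a candidate element $e$, build a game with the same schedules and attacker payoffs in which the defender's payoff is $1+c_t\ge 1$ whenever some $t\in e$ is attacked and a large negative constant otherwise. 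Since the attack set depends only on attacker payoffs it is unchanged, and since the attacker breaks ties for the defender she attacks $e$ whenever $e$ is a best response; hence the SSE value of this game is at least $1$ iff $e$ is feasible. Composing this with the extended Lemma~\ref{lmm:idc2fsb} and running it once per element determines $\mathcal{E}^I$ with $O(n)$ oracle calls.

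For (b) I would treat first the case of games \emph{without} identical targets, where every element is a singleton, $\mathcal{E}^I=\{\{t\}:t\in I\}$ for the set $I$ of inducible targets, and $U^I(\mathbf{x})=\max\{U_d(\mathbf{x},t):t\in\Gamma(\mathbf{x})\cap I\}$ whenever $\Gamma(\mathbf{x})\cap I\ne\emptyset$. Construct $G'$ by keeping the schedules and attacker payoffs, keeping the defender payoffs on $I$, and replacing the defender payoffs on every non-inducible target by a constant $-M$ with $M$ of polynomial bit-length exceeding $\max_t\max(|U^u_d(t)|,|U^c_d(t)|)$ (perturbed infinitesimally to keep $U^c_d>U^u_d$). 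Because $\Gamma$ is unchanged and the non-inducible targets are now strictly worst for the defender, for every $\mathbf{x}$ with $\Gamma(\mathbf{x})\cap I\ne\emptyset$ the SSE tie-break selects a target in $I$ and yields exactly $U^I(\mathbf{x})$, whereas every $\mathbf{x}$ with $\Gamma(\mathbf{x})\cap I=\emptyset$ yields value below $-M$ and is never an SSE strategy of $G'$. Using that $\max_{\mathbf{x}}U^I$ is attained and that every attack set contains an inducible element --- both implicit in the von Stengel and Zamir treatment of the pessimistic leader payoff underlying~\eqref{eq:ugie} --- the SSE value of $G'$ equals $\max_{\mathbf{x}}U^I(\mathbf{x})$; consequently any SSE $\langle\mathbf{x}^*,t^*\rangle$ of $G'$ has $\mathbf{x}^*$ maximizing $U^I$, and $t^*$, being the defender's favourite target in $\Gamma(\mathbf{x}^*)\cap I$, coincides with $f^I(\mathbf{x}^*)$, so $\langle\mathbf{x}^*,t^*\rangle$ is an ISE of $G$ --- obtained with one further oracle call.

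It remains to lift this to games with identical targets, and I expect this to be the main obstacle. The difficulty is structural: all targets of an element $e$ always share a common coverage $c_e$, so $\tilde U_d(\mathbf{x},e)=\min_{t\in e}[c_eU^c_d(t)+(1-c_e)U^u_d(t)]$ is concave and piecewise linear in the scalar $c_e$ and cannot be represented by one target's affine defender utility, and moreover an SSE oracle breaks within-element ties in the defender's \emph{favour} --- the opposite of the worst-case rule that $U^I$ imposes. The remedy I would pursue is to reduce to the identical-target-free case element by element: for each inducible element $e$ and each candidate minimiser $t_0\in e$, the set of $c_e$ on which $t_0$ attains $\min_{t\in e}U_d(\cdot,t)$ is an interval $[\ell_0,u_0]$ (each comparison $U_d(\mathbf{x},t)\ge U_d(\mathbf{x},t_0)$ being affine in $c_e$), and one then needs $\max\{U_d(\mathbf{x},t_0):t_0\in\Gamma(\mathbf{x}),\,c_e\in[\ell_0,u_0]\}$, computed by a further SSE call on a game that encodes the coverage-interval restriction; the maximum over the $O(n)$ pairs $(e,t_0)$ is $\max_{\mathbf{x}}U^I$. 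Faithfully encoding that box constraint inside a security game with the same schedule class, so that the SSE oracle honours it, is the delicate point; by contrast the identical-target-free core and the inducibility reductions above are comparatively routine given Lemma~\ref{lmm:idc2fsb}. Since the whole procedure issues $O(n)$ oracle calls on games of size polynomial in the input, the reduction runs in polynomial time.
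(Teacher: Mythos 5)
Your core reduction for games without identical targets is correct and close in spirit to the paper's, but the two differ in how the SSE oracle is used once the inducible set is known. The paper keeps all targets and, for each $t\in T^I$, makes one oracle call on a game whose defender payoffs are overwritten ($U_d^u(t')=1$, $U_d^c(t')=2$ for $t'\neq t$; $U_d^u(t)=3$, $U_d^c(t)=4$) so that the SSE simultaneously forces $t\in\Gamma(\mathbf{x})$ and maximizes $c_t$, hence solves $\max_{\mathbf{x}:\,t\in\Gamma(\mathbf{x})}U_d(\mathbf{x},t)$; the answer is the best of these $|T^I|$ solutions under the original defender utilities. You instead make a single call on a game that keeps the defender payoffs on inducible targets and pushes the non-inducible ones down to $-M$; this is essentially Proposition~\ref{prp:chaiseOptimal} (SSE of the game restricted to $T^I$) re-implemented so that the schedule class is preserved --- something the paper treats only as a practical shortcut and deliberately avoids in the theoretical reduction because restricting to $T^I$ alters the schedules. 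Your variant costs one oracle call instead of $|T^I|$ and still meets the same-schedule-class requirement, so it is a legitimate and arguably cleaner alternative for this case; your feasibility gadget and your use of Lemma~\ref{lmm:idc2fsb} match the paper's Step~1. (Both you and the paper implicitly assume $T^I\neq\emptyset$, which is inherited from Equation~\eqref{eq:ugie}.)

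The genuine gap is exactly where you flagged it: encoding, inside an SSE instance over the same schedules, the requirement that the attacked target $t_0$ be the \emph{worst} one for the defender within its element, i.e.\ the interval constraint $\ell_0\le c_e\le u_0$. The paper closes this with two observations. The lower bound is discarded for free: the objective $U_d(\mathbf{x},t_0)$ is increasing in $c_{t_0}$, so if the optimum without that constraint violates $c_{t_0}\ge\ell_0$ then no feasible point satisfies it and the pair $(e,t_0)$ is simply skipped. The upper bound is encoded by picking one target $t'$ identical to $t_0$ and overwriting its attacker payoffs with the constant $U_a^c(t')=U_a^u(t')=U_a(u_0,t_0)$; the best-response constraint $U_a(c_{t_0},t_0)\ge U_a(c_{t'},t')$, vacuous while $t'$ and $t_0$ were identical, then becomes exactly $c_{t_0}\le u_0$ because $U_a(\cdot,t_0)$ decreases in coverage. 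Without this (or an equivalent) device your reduction is complete only for games without identical targets --- which, to be fair, is also all that the paper's displayed proof of Theorem~\ref{thm:ISEtoSSE} covers before its separate identical-targets subsection.
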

\begin{proof}
    An ISE can be computed in the following way:
    \begin{itemize}
      \item[1.] Check inducibility of every targets and obtain $T^I$.
      \item[2.] For each $t \in T^I$, solve $\max_{\mathbf{x}: t \in \Gamma(\mathbf{x}) } U_d(\mathbf{x}, t)$, which yields the defender's optimal strategy under the constraint that $t$ is an optimal response of the attacker.

      \item[3.] Among all the solutions obtained above, find out the one with the highest defender utility. The corresponding target, $t^*$ say, and the optimal defender strategy corresponding to $t^*$ forms an ISE.
    \end{itemize}

    Specifically, in Step 1, the inducibility problem reduces to the feasibility problem by Lemma~\ref{lmm:idc2fsb}. The feasibility of $t$ can further be decided by computing the SSE of a game in which defender's payoff parameters are modified to: $U_d^u(t') = 1$ and $U_d^c(t') = 2$ for all $t' \neq t$; and $U_d^u(t) = 3$ and $U_d^c(t) = 4$ (the attacker's payoffs remain the same as in the feasibility problem). In this game, even the penalty on $t$ is strictly higher than the rewards on all the other targets, so the defender strictly prefers the attacker to choose $t$, irrespective of the coverage of the targets.
    Therefore, $t$ is feasible if $t$ is in the attacker's attack set in every SSE, so we can check whether this is true to decide the feasibility of $t$.

    In Step 2, each of the optimizations can be solved, again, by computing the SSE of a game in which defender's payoff parameters are modified to: $U_d^u(t') = 1$ and $U_d^c(t') = 2$ for all $t' \neq t$; and $U_d^u(t) = 3$ and $U_d^c(t) = 4$ (the attacker's payoffs remain the same as in the \emph{original game}).
    $t$ is inducible and hence feasible in the original game, and the feasibility remains in modified game as the attacker's payoffs are the same.
    For the same reason above, an SSE must incorporate $t$ in the attack set, so that $U_a(\mathbf{x}, t) \ge U_a(\mathbf{x}, t')\ \forall\,t'\neq t$ is satisfied. In addition, $c_t$ is maximized in the solution, so the SSE is exactly a solution to the optimization in Step 2.

    Therefore, an ISE is obtained via polynomially many calls to the computation of an SSE. This completes the proof.
\end{proof}

\subsection{Dealing with Identical Targets}

In the presence of identical targets, it is assumed that, for every inducible element, the target worst for the defender is to be chosen by the attacker.
We keep Step 1 of the procedure in the proof of Theorem~\ref{thm:ISEtoSSE} by treating identical targets as one target, so that a target is inducible if at least one target identical to it is inducible (even though this target might not actually be induced).
However, when we actually compute the defender's optimal strategy conditioned on a particular inducible target being attacked as in Step 2, we need additional constraints that require this target is worst for the defender among all targets that is identical to it, i.e.,
\[
    U_d(c_t, t) \le U_d(c_{t'}, t'), \quad \forall \ t' \text{ identical to } t.
\]
We convert the constraints to equivalent ones in the form of $U_a(\mathbf{x}, t) \ge U_a(\mathbf{x}, t')$ (as in Step 2) to finish the reduction.

Observe that $c_t = c_{t'}$ since $t'$ and $t$ are identical,
so the above constraints are equivalent to
\[
    U_d(c_t, t) \le U_d(c_{t}, t'), \quad \forall \ t' \text{ identical to } t,
\]
which only involve a single variable $c_t$.
Thus, the constraints effectively reduce to an inequality of the form $\alpha \le c_t \le \beta$, with two constants $\alpha$ and $\beta$.
For $c_t \ge \alpha$, given the objective of the problem as maximizing $U_d(\mathbf{x}, t)$ (which increases with $c_t$), this part can be ignored: if the solution does not satisfy $c_t \ge \alpha$, that means there is no feasible solution satisfying $c_t \ge \alpha$; we simply skip $t$ in Step 3.
The second half, $c_t \le \beta$, can be captured by modifying the attacker's payoff parameters of an arbitrary target $t'$, that is identical to $t$, to ${U}_a^c(t') = {U}_a^u(t') = U_a(\beta, t)$, so that the constraint $U_a(c_t, t) \ge U_a(c_{t'}, t')$ is now equivalent to $c_t \le \beta$ (this constraint is useless before the modification when it always holds that $U_a(c_t, t) = U_a(c_{t'}, t')$).

\if 0
\begin{Theorem}
\label{theorem:complexity}
    The problem of computing an ISE reduces to the problem of computing an SSE in polynomial time.
\end{Theorem}
\begin{proof}
    We compute an ISE in two steps. First, check the inducibility of all targets. Second, compute an SSE in $g'$.
In the first step, the inducibility problem reduces to the feasibility problem by Lemma~\ref{lmm:idc2fsb}. Furthermore, the feasibility of $t$ can be decided by computing the SSE of a modified game where the attacker's parameters remain unchanged, while defender's payoff parameters are reset to: $U_d^u(t') = 1$ and $U_d^c(t') = 2$ for all $t' \neq t$; and $U_d^u(t) = 3$ and $U_d^c(t) = 4$. In this game, even the penalty on $t$ is higher than the rewards on all the other targets.
    Therefore, if $t$ is indeed feasible, every SSE must incorporate $t$ in the attack set; and we can check whether this is true to decide the feasibility. The second step directly computes an SSE, which is shown to be ISE of $g$ (Theorem~\ref{theorem:chaiseOptimal}).
\end{proof}
\fi

\subsection{Algorithmic Implementation}

As a theoretical result, the above reduction involves repeated calls of computing an SSE and therefore falls short on practical performance.
We introduce a more concise practical approach to compute an ISE. We first limit our scope to the games without identical target, for the ease of reading. The extension to include identical targets into consideration is fairly straightforward.

First, inducibility of a target $t$ can be decided using the following program: $t$ is inducible iff the optimum $u^* > 0$.
\begin{equation}\label{eq:inducible_t}
\begin{aligned}
\max_{\mathbf{x},u}\quad&u \\ 
\text{s.t.}\quad& U_a(\mathbf{x},t)\geq U_a(\mathbf{x},t')+u\quad\forall t'\neq t \\ 
&\sum\nolimits_{j\in J}x_j=1 
\end{aligned}
\end{equation}
Solving the above program for each $t \in T$, we obtain the inducible target set $T^I$.
By Proposition~\ref{prp:chaiseOptimal}, the computation of an ISE further converts to computing an SSE of a game restricted to targets in $T^I$.
There is a large body of research on designing algorithms for computing an SSE of security games with various types of schedules, such as ASPEN~\cite{Jain10} and CLASPE~\cite{gan15}.
These algorithms can be applied directly.
We note that the above approach requires altering schedules in the original game, while all schedules remain the same throughout our theoretical reduction.


Implied by Proposition~10 to compute the ``pessimistic" leader's payoff~\cite{Stengel10}, we can directly compute an SSE in a restricted game $g'$ whose target set is the set of inducible targets in targeted game $g$, and map this SSE to ISE of game $g$.
\begin{Proposition}\label{prp:chaiseOptimal}
For a security game $g = \langle T,R,S \rangle$, an SSE defender strategy of the game $g' = \langle T^I,R,S^I \rangle$ is an ISE strategy in $g$, where $T^I$ is the inducible target set of $g$, and $S^I=\{ s\cap T^I|s\in S\}$.
\end{Proposition}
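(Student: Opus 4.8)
The plan is to move everything into the restricted game $g'$, in which \emph{every} target turns out to be inducible, so that ISE and SSE of $g'$ coincide, and then to transport the optimum back to $g$. First I would show that every $t\in T^I$ is inducible in $g'$: by Definition~\ref{def:inducible_target} there is $\mathbf{x}\in\mathcal{X}$ with $\Gamma_g(\mathbf{x})=\{t\}$; mapping each joint schedule of $g$ that assigns a resource to $s$ to the one assigning it to $s\cap T^I$ yields a strategy $\mathbf{x}'\in\mathcal{X}'$ whose coverage on every $t'\in T^I$ equals $c_{t'}(\mathbf{x})$ (targets outside $T^I$ are merely discarded), so $U_a(\mathbf{x}',t')=U_a(\mathbf{x},t')$ for all $t'\in T^I$, and since $U_a(\mathbf{x},t)>U_a(\mathbf{x},t')$ for every $t'\neq t$ we get $\Gamma_{g'}(\mathbf{x}')=\{t\}$; the same coverage-preservation also shows no two targets of $T^I$ are identical in $g'$, so the element set of $g'$ is exactly $\{\{t\}:t\in T^I\}$, all inducible. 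Consequently, by Equation~\eqref{eq:ugie} applied to $g'$, $U^I(\mathbf{x}')=\max_{t\in\Gamma_{g'}(\mathbf{x}')}U_d(\mathbf{x}',t)=U_d(\mathbf{x}',f^S(\mathbf{x}'))$ for every $\mathbf{x}'\in\mathcal{X}'$; thus the utility guarantee and the SSE payoff agree pointwise on $\mathcal{X}'$, their argmaxima coincide, and an SSE strategy of $g'$ is an ISE strategy of $g'$. It therefore remains to show that an ISE strategy of $g'$, lifted to $g$ through any joint schedule restricting to it on $T^I$, is an ISE strategy of $g$.

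By the reductions above and Equation~\eqref{eq:ugie}, both $\max_{\mathbf{x}\in\mathcal{X}}U^I(\mathbf{x})$ and $\max_{\mathbf{x}'\in\mathcal{X}'}U^I(\mathbf{x}')$ equal $\max_{t\in T^I}U_d(c^{\star}_t,t)$, where $c^{\star}_t$ is $c_t$ maximised subject to $t$ belonging to the relevant attack set. The inequality $\max_{\mathbf{x}'}U^I(\mathbf{x}')\ge\max_{\mathbf{x}}U^I(\mathbf{x})$ is immediate: given $\mathbf{x}\in\mathcal{X}$, choose $t^\circ\in\Gamma_g(\mathbf{x})\cap T^I$ attaining $U^I(\mathbf{x})$; since $t^\circ$ beats every target of $T$, in particular every target of $T^I$, its image in $g'$ keeps $t^\circ$ in the attack set with unchanged coverage, so $U^I$ of that image is at least $U^I(\mathbf{x})$. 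For the reverse inequality I would start from the $g'$-optimal strategy $\mathbf{x}'^*$, let $t^*$ be the target it induces and $V$ the common value, take a $g$-witness $\mathbf{y}$ with $\Gamma_g(\mathbf{y})=\{t^*\}$ and a lift $\hat{\mathbf{x}}$ of $\mathbf{x}'^*$, and move along $(1-\lambda)\hat{\mathbf{x}}+\lambda\mathbf{y}$: by Step~1 this keeps $t^*$ strictly best within $T^I$ for every $\lambda>0$, and at the first $\lambda_0$ at which $t^*$ also reaches the top among $T\setminus T^I$ it enters $\Gamma_g$, so its defender payoff there lower-bounds $U^I$ at that point.

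The main obstacle is closing this last inequality, because lifting $\mathbf{x}'^*$ to $g$ is not innocuous: some non-inducible $s\in T\setminus T^I$ may strictly beat $t^*$ (the restriction to $T^I$ says nothing about such $s$), and the $\lambda_0$ that neutralises $s$ need not be infinitesimal, so $c_{t^*}$ may fall below $c_{t^*}(\mathbf{x}'^*)$ and the payoff along $t^*$ may miss $V$. Equivalently one must prove, for every $t^*\in T^I$, that $\max\{c_{t^*}:t^*\in\Gamma_g(\mathbf{x})\}=\max\{c_{t^*}:t^*\in\Gamma_{g'}(\mathbf{x}')\}$, i.e.\ that the constraints contributed by non-inducible targets never bind at the extreme coverage of $t^*$. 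This is exactly the point covered by von Stengel and Zamir's analysis of the ``pessimistic'' leader payoff (Proposition~10 of~\cite{Stengel10}): at a coverage maximising $c_{t^*}$ subject to $t^*$ being at least tied for best over $T^I$, if some $s\in T\setminus T^I$ strictly beat $t^*$ then the actual attack set there would consist only of non-inducible targets, and combining the inducibility witness of $t^*$ with the non-inducibility witnesses of those targets produces a coverage with strictly larger $c_{t^*}$, a contradiction. I would finish the proof by carrying out precisely this case analysis on $\Gamma_g$ at the relevant extreme point; the rest is the bookkeeping above.
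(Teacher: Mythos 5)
The paper does not actually prove this proposition; it is stated as being ``implied by Proposition 10'' of von Stengel and Zamir, with no further argument. Your reconstruction supplies the missing detail, and its architecture is sound: (i) the coverage-preserving projection $\mathcal{X}\to\mathcal{X}'$ (replace each $s$ by $s\cap T^I$) makes every $t\in T^I$ inducible and pairwise non-identical in $g'$, so by Equation~\eqref{eq:ugie} the utility guarantee and the SSE payoff coincide pointwise on $\mathcal{X}'$; (ii) the direction $\max_{\mathbf{x}'\in\mathcal{X}'}U^I(\mathbf{x}')\ge\max_{\mathbf{x}\in\mathcal{X}}U^I(\mathbf{x})$ follows by projecting a $g$-optimal strategy; (iii) the only real content is the reverse direction, namely that at the lift $\hat{\mathbf{x}}$ of the $g'$-optimum no non-inducible target strictly beats $t^*$. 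You correctly isolate (iii) as the crux and attribute it to von Stengel and Zamir, which is precisely what the paper does for the entire proposition, so in substance you have fleshed out the paper's one-line citation rather than taken a different route.

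The one place your sketch is not yet a proof is the closing contradiction. ``Combining the inducibility witness of $t^*$ with the non-inducibility witnesses of those targets'' does not parse: non-inducibility is a universal statement (every strategy making $s$ a best response makes some other target one as well), so there is no witness to combine, and it is unclear how this would manufacture a strictly larger $c_{t^*}$. The clean way to close (iii) is to use the fact --- already required for the paper's Equation~\eqref{eq:ugie} to be well defined, and provable by a covering argument (when no two targets are identical, the finitely many proper affine slices $\{\mathbf{c}:U_a(\mathbf{c},s_i)=U_a(\mathbf{c},s_j)\}$ cannot cover a relatively open subset of the coverage polytope, so no attack set can consist solely of non-inducible targets) --- that \emph{every} attack set $\Gamma_g(\hat{\mathbf{x}})$ contains an inducible target. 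Since the lift preserves coverage on $T^I$, $t^*$ weakly dominates all of $T^I$ at $\hat{\mathbf{x}}$; picking an inducible $t''\in\Gamma_g(\hat{\mathbf{x}})$ gives $U_a(\hat{\mathbf{x}},t^*)\ge U_a(\hat{\mathbf{x}},t'')=\max_{t'}U_a(\hat{\mathbf{x}},t')$, hence $t^*\in\Gamma_g(\hat{\mathbf{x}})$ with its coverage, and therefore its defender payoff, unchanged. This renders your $\lambda_0$-perturbation and the case analysis on $\Gamma_g$ at the extreme point unnecessary.
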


When identical targets exist, we first enumerate all inducible elements $\mathcal{E}^I$ by solving optimization~\eqref{eq:inducible_t} with slight modification, by replacing the target $t$ and utility function $U_{\dag}(\mathbf{x},t)$ with the element $e$ and element-based utility function $\Tilde{U}_{\dag}(\mathbf{x},e)$ defined in~\eqref{eq:element_utility}, for $\dag\in\{a,d\}$. An ISE can be computed with the multi-LP approach~\cite{Conitzer06}, where each LP corresponds to an inducible element $e\in\mathcal{E}^I$ as follows
\begin{equation}\label{eq:ise_identical}
\begin{aligned}
\max_{\mathbf{x},u}\quad&u \\ 
\text{s.t.}\quad& \tilde{U}_a(\mathbf{x},e)\geq \tilde{U}_a(\mathbf{x},e')\quad&\forall e'\in\mathcal{E} \\ 
& u\leq U_d(\mathbf{x},t)\quad&\forall t\in e\\
&\sum\nolimits_{j\in J}x_j=1 
\end{aligned}
\end{equation}
The solution with the highest objective among multiple LPs is an ISE. It can be easily verified that the large body of designing algorithms, especially those based on strategy generation techniques, can adapt to solve~\eqref{eq:ise_identical} with little effort.

\if 0

To compute the ISE of instance $g$, we only need to compute the SSE of the restricted game $g'$ (Theorem~\ref{theorem:chaiseOptimal}).
Several scalable approaches have been proposed to compute the SSE in security games with arbitrary schedules, such as ASPEN~\cite{Jain10} and CLASPE~\cite{gan15}. Thus, the main challenge is to efficiently enumerate the inducible targets.

By Definition~\ref{def:inducible}, checking whether $t$ is inducible or not can be formulated as the following program, called \textbf{LP}($t$). $t$ is inducible iff $u^*>0$ where $u^*$ is the optimal objective.
\begin{subequations}\label{eq:LP}
\begin{eqnarray}
\max_{\mathbf{x},u}&&u\label{eq:LP:st1}\\
\text{s.t.}&& U_a(\mathbf{x},t)\geq U_a(\mathbf{x},t')+u\quad\forall t'\neq t\label{eq:LP:st2}\\
&&\sum\nolimits_{j\in J}x_j=1\label{eq:LP:st3}.
\end{eqnarray}
\end{subequations}

\noindent
\textbf{Column Generation (CG)}\quad
We adopt CG to handle the exponentially large number of variables (joint schedules) in \textbf{LP}($t$)~\cite{bertsimas97}. CG iteratively solves a \emph{master problem} and a \emph{slave problem}. The master problem solves the restricted \textbf{LP}($t$) with limited variables, and the slave problem generates a new variable with positive \emph{reduced cost}. We refer the reader to existing work for detailed implementation of CG~\cite{Jain10,gan15}.

\noindent
\textbf{Heuristic Bounds to Improve Scalability}\quad
In order to determine the inducibility of target $t$, it is not always necessary to solve \textbf{LP}($t$) to optimality.
For this aim, we propose two efficient heuristic bounds on the value of $u^*$ which are sufficient for inducibility checking in most cases. The optimal solution of the master problem in CG is naturally chosen as a \emph{lower bound} $u^{LB}$, which is often tight enough after a small number of iterations. Once $u^{LB}>0$, we pre-terminate the CG and the target $t$ is ensured to be inducible.

The \emph{upper bound} $u^{UB}$ is computed with the compact representation (coverage) of the defender strategy. Let $\mathcal{C}$ denote the set of implementable coverage strategies, i.e., $\mathcal{C}=\{\mathbf{c}=\mathbf{P}\mathbf{x}|\mathbf{x}\in\mathcal{X}\}$. While it is hard to find a computationally tractable and explicit representation for $\mathcal{C}$, we relax the constraints and replace $\mathcal{C}$ with a tractable space $\mathcal{C}'  \supset \mathcal{C}$. We effectively tighten $\mathcal{C}'$ by utilizing the marginal probability of the resource being allocated to the schedule~\cite{gan15}. $u^{UB}$ can be computed in polynomial time by solving the relaxed \textbf{LP}($t$) on $\mathcal{C}'$. Target $t$ is excluded from $T^i$ if $u^{UB}<0$.
\fi
\section{Experimental Evaluation}
We evaluate our solution concept and proposed algorithmic implementation with extensive experiments. All results are obtained on a platform with a
2.60 GHz dual-core CPU and 8.0 GB memory. All linear programs are solved using the existing solver CPLEX (version 12.4). The random instances are generated as follows: rewards and penalties are all integers randomly drawn from $[0,5]$ and $[-5,0]$ respectively. Each schedule is randomly generated covering a fixed number $l$ of targets and each target is ensured to be covered by at least one schedule. The resources are all homogeneous, i.e., $S_r=S$ for any $r\in R$. Unless otherwise specified, all results are averaged on 100 randomly generated instances.

For the purpose of comparison, we define the \emph{overoptimism} and \emph{sub-optimality} of SSE w.r.t. the utility guarantee.
\begin{Definition}[\textbf{Overoptimism and sub-optimality}]
Let $\mathbf{x}$ be an SSE strategy.
\begin{itemize}
  \item $\mathbf{x}$ is overoptimistic if $U_d(\mathbf{x},f^S(\mathbf{x}))>U_d^I(\mathbf{x})$;
  \item $\mathbf{x}$ is sub-optimal if $U_d^I(\mathbf{x})<\max_{\mathbf{x}'\in\mathcal{X}}U_d^I(\mathbf{x}')$.
\end{itemize}
\end{Definition}

\begin{figure}[h]
\center
\subfigure{
    \label{fig:experiments:inducibility} 
    \includegraphics[width=\figwidth]{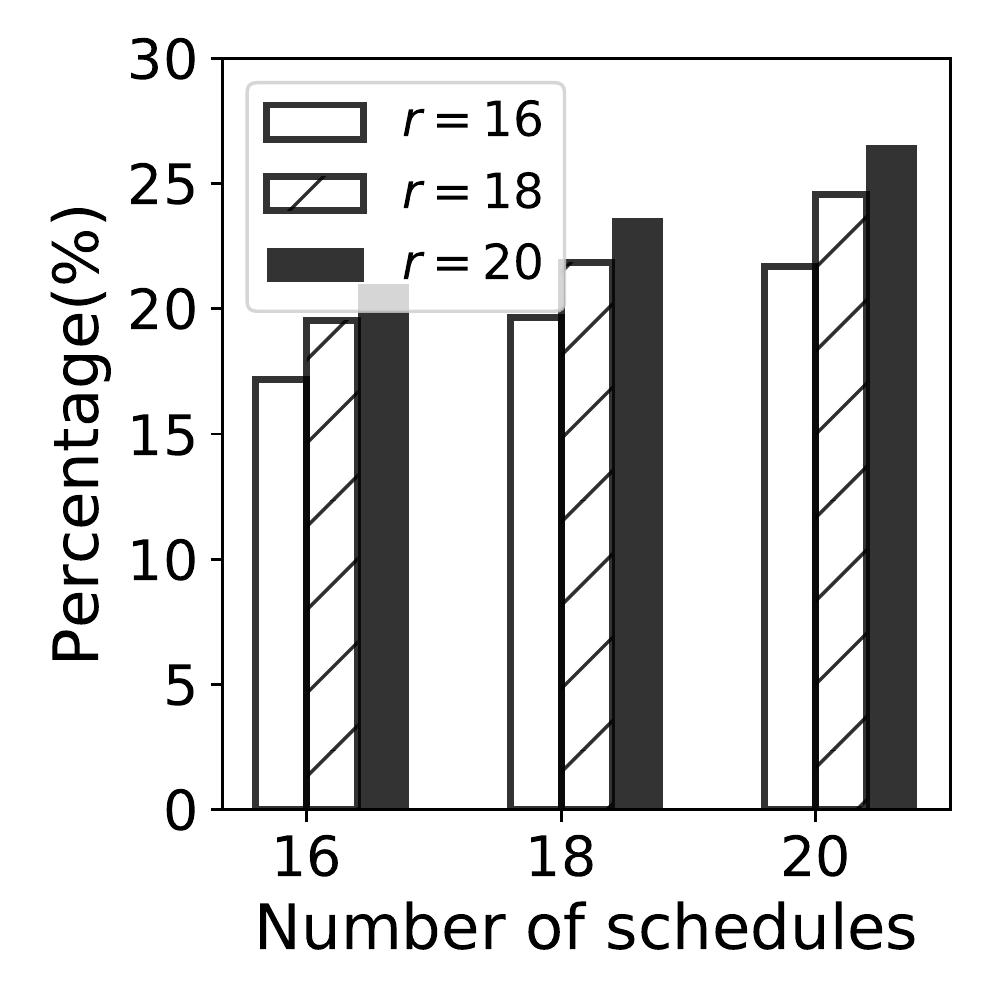}}
    \subfigure{
    \label{fig:experiments:scalability} 
    \includegraphics[width=\figwidth]{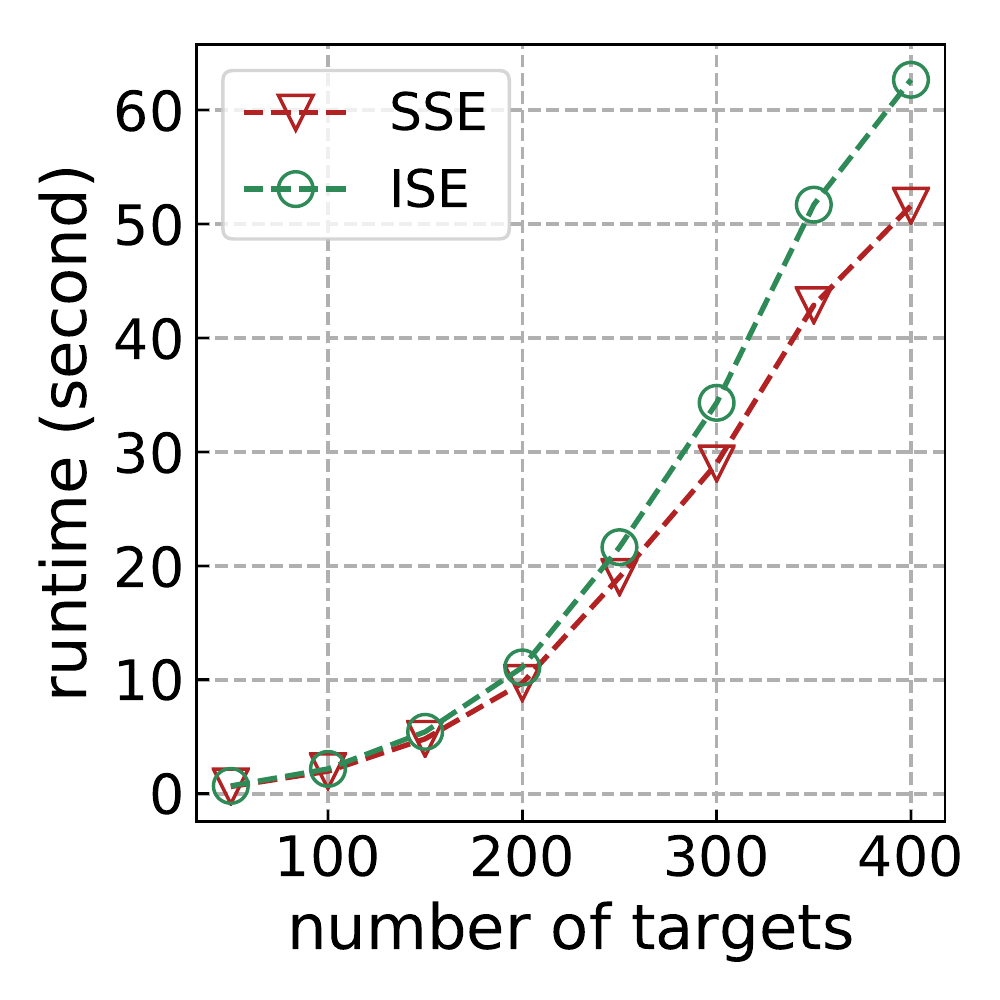}}
    \caption{\small{Inducibility (left) and scalability (right).}}\label{fig:experiments:indandsca}
\end{figure}
\noindent

\textbf{Inducibility}\quad
We depict the percentage of inducible targets on instances with 100 targets and 1 resource on the left of Figure~\ref{fig:experiments:indandsca}. The results show that with more schedules and more targets per schedule, the game has more inducible targets. That is because the defender can cover the high valued targets with enough resources so that the low valued targets can be induced to become unique best responses. The important observation here is that the percentage is neither too high nor too low (within $[15\%,30\%]$), which indicates that the inducibility is not a trivial property.

\noindent
\textbf{Scalability}\quad
We evaluate the scalability of our algorithmic implementation for computing an ISE. The result is shown on the right of Figure~\ref{fig:experiments:indandsca}. The game instances are randomly generated with $l=5$, $|R|=5$, $|T|$ ranges from 50 to 400 with step size of 50, and $|S|=\frac{|T|}{2}$. We adopt the column generation approach with heuristic bounds for pruning to solve the large scale LPs~\cite{gan15}. As a comparison, the scalability of computing SSE with the same algorithmic framework is also depicted. The result shows that, it takes almost the same computational costs to compute an ISE and an SSE. The algorithmic implementation can compute ISE for large-scale instances. Thus, ISE successfully mitigates the inducibility issue of SSE without sacrificing the benefit of scalable algorithms for computng SSE.

\begin{figure}[th]
  \centering
  \subfigure{
    \label{fig:experiments:overestimatepercentage} 
    \includegraphics[width=\figwidth]{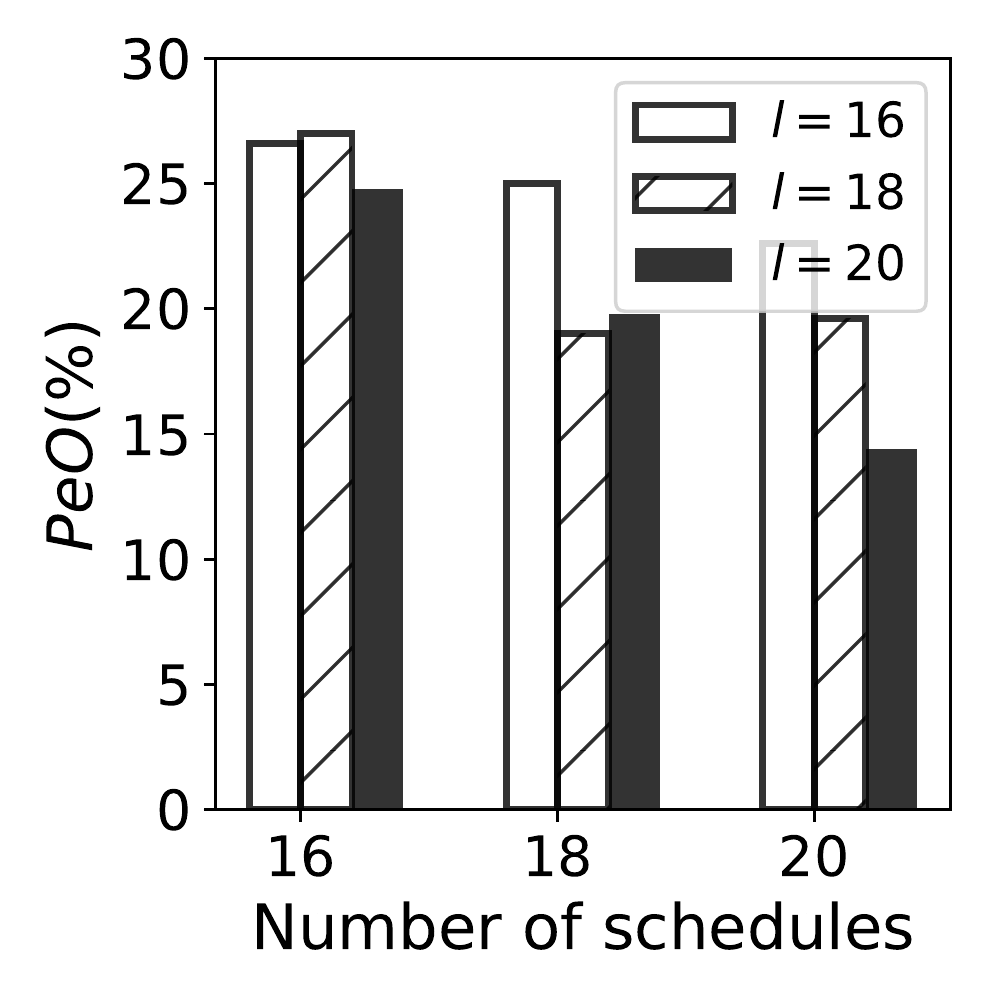}}
    \subfigure{
    \label{fig:experiments:suboptimalpercentage} 
    \includegraphics[width=\figwidth]{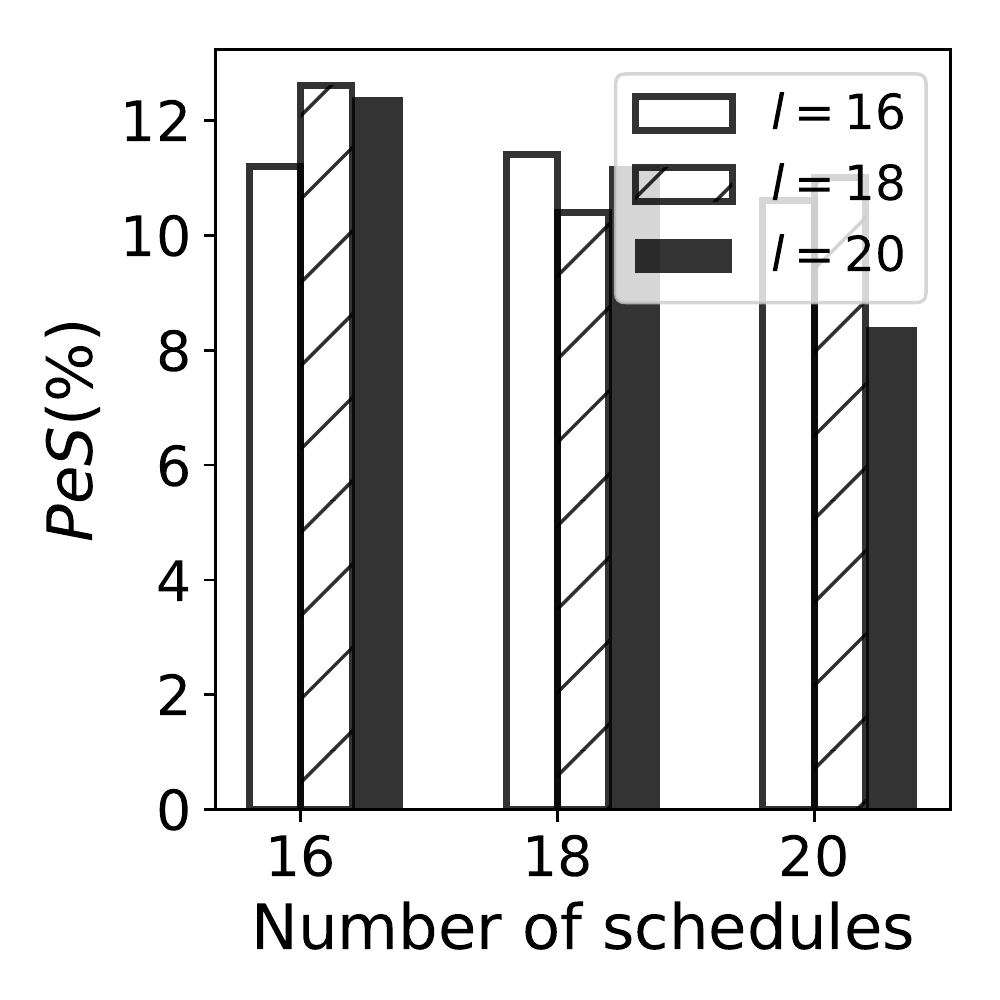}}
    \subfigure{
    \label{fig:experiments:overestimate} 
    \includegraphics[width=\figwidth]{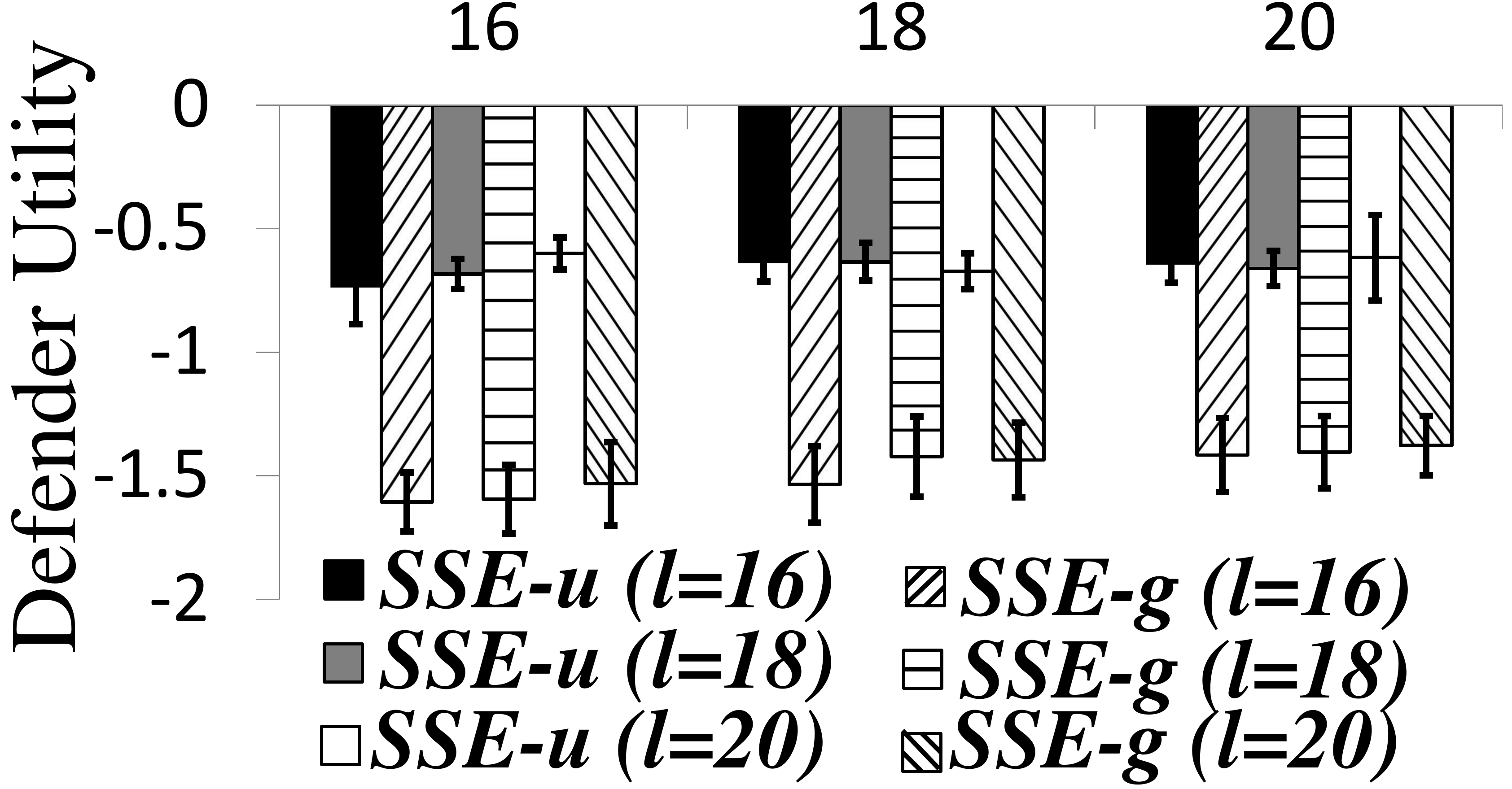}}
    \subfigure{
    \label{fig:experiments:suboptimal} 
    \includegraphics[width=\figwidth]{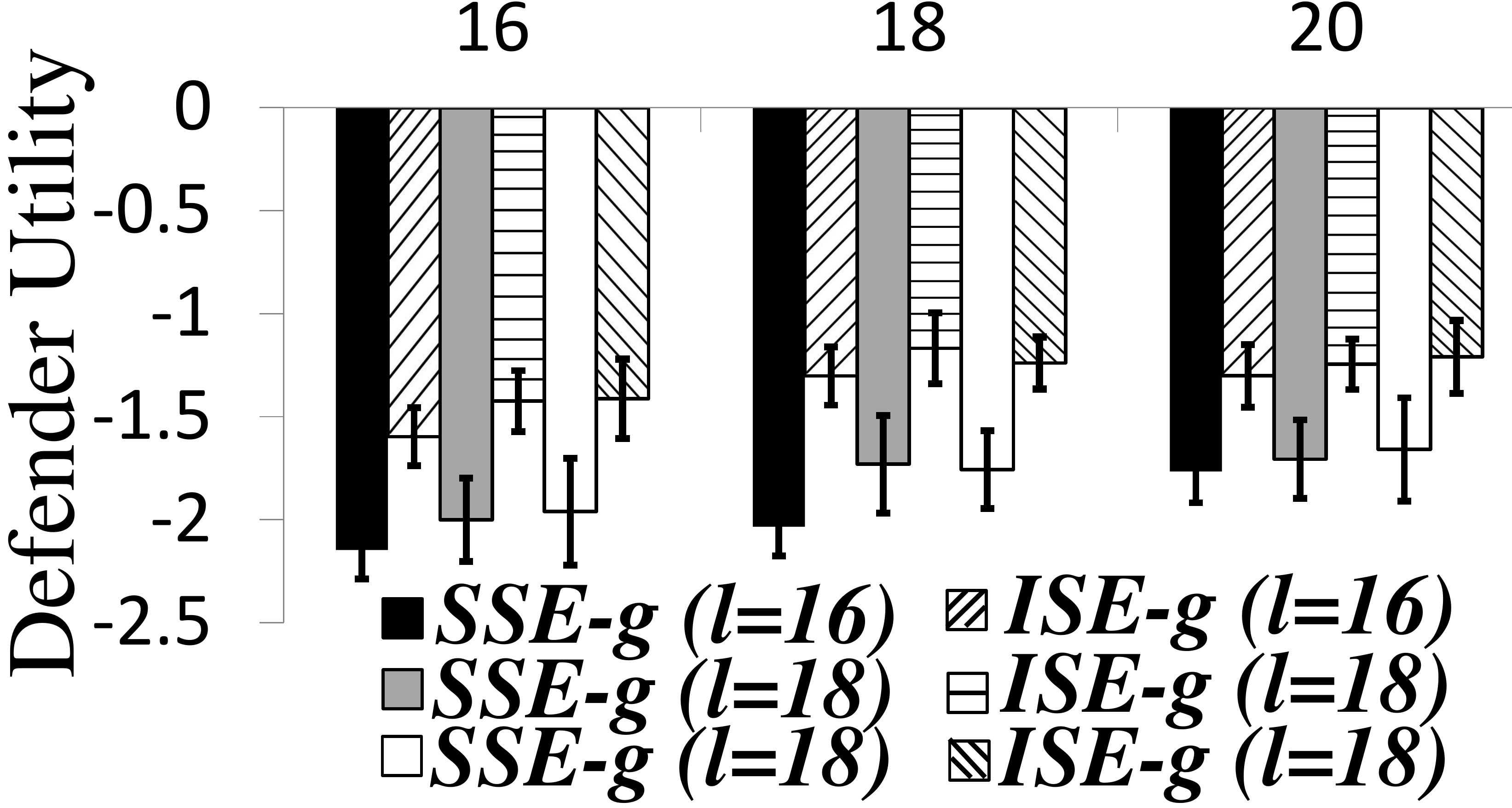}}
    \caption{\small{Overoptimism (left) and sub-optimality (right) of SSE.}}\label{fig:experiments:oversub}
\end{figure}
\noindent
\textbf{Overoptimism and Sub-optimality of SSE}\quad
We examine the overoptimism and sub-optimality of SSE. 500 instances are randomly generated with 200 targets, 1 resource, $|S|\in\{16,18,20\}$ and $l\in\{16,18,20\}$. This setting fits many realistic security domains, such as the port protection~\cite{Shieh12}, where the Coast Guard has few resources (patrol boats) and limited schedules due to complex geographic and efficiency constraints, and each schedule corresponds with one patrolling path visiting several targets. The results are shown in Figure~\ref{fig:experiments:oversub}, where \emph{PeO} and \emph{PeS} denote the percentages of instances with overoptimistic and sub-optimal SSE respectively. Moreover, Figure~\ref{fig:experiments:oversub} also shows the comparisons between the expected utility of SSE (``SSE-u") with the utility guarantee of SSE (``SSE-g") averaged on instances where SSE is overoptimistic, and similarly the comparisons between average utility guarantees of SSE and ISE (tagged with ``SSE-g" and ``ISE-g" respectively) on instances, where SSE is suboptimal. The 95\% confidence interval is depicted. The results show that SSE suffers from significant overoptimism and sub-optimality, which is highly problematic as we explained in the introduction.
We also conduct simulations in a large number of different parameter settings with 3 and more resources. Here we list the results on ten settings in the table on the right. 
\begin{wrapfigure}{r}{40mm}
  \begin{center}
    \includegraphics[width = 40mm]{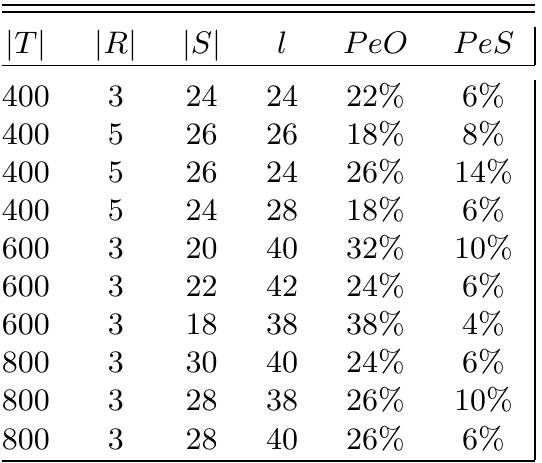}
  \end{center}
\end{wrapfigure}
For each of these settings, we randomly generate 50 instances.
Significant numbers of cases with overoptimistic and suboptimal SSE are observed for almost every setting.
Thus, the aforementioned risk of applying SSE in practice can be a general issue for many security domains and applications, and we argue that ISE should be considered as a ``safer'' alternative.
\section{Conclusion}
This paper reveals the significant potential risk of overoptimism of SSE in security games. We propose a new solution concept, ISE, by exploiting the inducible targets. Our theoretical analysis proves the existence of ISE and its optimality in utility guarantee, and our formal comparisons between ISE and SSE emphasize that ISE is a more suitable solution concept in security games. Extensive evaluation shows that SSE is significantly overoptimistic and ISE achieves significantly higher utility guarantee than SSE. We will investigate the inducibility issues in generic games and Bayesian games in future work.

\section*{Acknowledgments}
This research was supported by MURI Grant W911NF-11- 1-0332 and the National Research Foundation, Prime Minister's Office, Singapore under its IDM Futures Funding Initiative.
Jiarui Gan is supported by the EPSRC International Doctoral Scholars Grant EP/N509711/1. Tran-Thanh Long was supported by the EPSRC funded project EP/N02026X/.
\bibliographystyle{aaai}
\bibliography{reference}

\begin{thebibliography}{}

\bibitem[\protect\citeauthoryear{An}{2017}]{An17}
An, B.
\newblock 2017.
\newblock Game theoretic analysis of security and sustainability.
\newblock In {\em IJCAI},  5111--5115.

\bibitem[\protect\citeauthoryear{Basilico \bgroup et al\mbox.\egroup
  }{2017}]{Basilico17}
Basilico, N.; Celli, A.; Nittis, G.~D.; and Gatti, N.
\newblock 2017.
\newblock Coordinating multiple defensive resources in patrolling games with
  alarm systems.
\newblock In {\em AAMAS},  678--686.

\bibitem[\protect\citeauthoryear{Conitzer and Sandholm}{2006}]{Conitzer06}
Conitzer, V., and Sandholm, T.
\newblock 2006.
\newblock Computing the optimal strategy to commit to.
\newblock In {\em EC},  82--90.

\bibitem[\protect\citeauthoryear{Fang \bgroup et al\mbox.\egroup
  }{2016}]{Fang16}
Fang, F.; Nguyen, T.~H.; Pickles, R.; Lam, W.~Y.; Clements, G.~R.; An, B.;
  Singh, A.; Tambe, M.; and Lemieux, A.
\newblock 2016.
\newblock Deploying {PAWS:} {F}ield optimization of the protection assistant
  for wildlife security.
\newblock In {\em IAAI},  3966--3973.

\bibitem[\protect\citeauthoryear{Gan, An, and Vorobeychik}{2015}]{gan15}
Gan, J.; An, B.; and Vorobeychik, Y.
\newblock 2015.
\newblock Security games with protection externalities.
\newblock In {\em AAAI},  914--920.

\bibitem[\protect\citeauthoryear{Gan \bgroup et al\mbox.\egroup
  }{2017}]{gan17security}
Gan, J.; An, B.; Vorobeychik, Y.; and Gauch, B.
\newblock 2017.
\newblock Security games on a plane.
\newblock In {\em AAAI},  530--536.

\bibitem[\protect\citeauthoryear{Hohn}{2013}]{hohn2013elementary}
Hohn, F.~E.
\newblock 2013.
\newblock {\em Elementary matrix algebra}.
\newblock Courier Corporation.

\bibitem[\protect\citeauthoryear{Jain \bgroup et al\mbox.\egroup
  }{2010}]{Jain10}
Jain, M.; Kardes, E.; Kiekintveld, C.; Ord{\'{o}}{\~{n}}ez, F.; and Tambe, M.
\newblock 2010.
\newblock Security games with arbitrary schedules: {A} branch and price
  approach.
\newblock In {\em AAAI},  792--797.

\bibitem[\protect\citeauthoryear{Kiekintveld \bgroup et al\mbox.\egroup
  }{2009}]{Kiekintveld09}
Kiekintveld, C.; Jain, M.; Tsai, J.; Pita, J.; Ord{\'{o}}{\~{n}}ez, F.; and
  Tambe, M.
\newblock 2009.
\newblock Computing optimal randomized resource allocations for massive
  security games.
\newblock In {\em AAMAS},  689--696.

\bibitem[\protect\citeauthoryear{Korzhyk \bgroup et al\mbox.\egroup
  }{2011}]{Korzhyk11}
Korzhyk, D.; Yin, Z.; Kiekintveld, C.; Conitzer, V.; and Tambe, M.
\newblock 2011.
\newblock Stackelberg vs. {N}ash in security games: {A}n extended investigation
  of interchangeability, equivalence, and uniqueness.
\newblock {\em J. Artif. Intell. Res.} 41:297--327.

\bibitem[\protect\citeauthoryear{Leitmann}{1978}]{leitmann78}
Leitmann, G.
\newblock 1978.
\newblock On generalized {S}tackelberg strategies.
\newblock {\em Journal of Optimization Theory and Applications} 26(4):637--643.

\bibitem[\protect\citeauthoryear{McCarthy \bgroup et al\mbox.\egroup
  }{2016}]{McCarthy16}
McCarthy, S.~M.; Tambe, M.; Kiekintveld, C.; Gore, M.~L.; and Killion, A.
\newblock 2016.
\newblock Preventing illegal logging: Simultaneous optimization of resource
  teams and tactics for security.
\newblock In {\em AAAI},  3880--3886.

\bibitem[\protect\citeauthoryear{Nguyen \bgroup et al\mbox.\egroup
  }{2013}]{nguyen13}
Nguyen, T.~H.; Yang, R.; Azaria, A.; Kraus, S.; and Tambe, M.
\newblock 2013.
\newblock Analyzing the effectiveness of adversary modeling in security games.
\newblock In {\em AAAI},  718--724.

\bibitem[\protect\citeauthoryear{Okamoto, Hazon, and Sycara}{2012}]{Okamoto12}
Okamoto, S.; Hazon, N.; and Sycara, K.~P.
\newblock 2012.
\newblock Solving non-zero sum multiagent network flow security games with
  attack costs.
\newblock In {\em AAMAS},  879--888.

\bibitem[\protect\citeauthoryear{Paruchuri \bgroup et al\mbox.\egroup
  }{2008}]{Paruchuri08}
Paruchuri, P.; Pearce, J.~P.; Marecki, J.; Tambe, M.; Ord{\'{o}}{\~{n}}ez, F.;
  and Kraus, S.
\newblock 2008.
\newblock Playing games for security: {A}n efficient exact algorithm for
  solving {B}ayesian {S}tackelberg games.
\newblock In {\em AAMAS},  895--902.

\bibitem[\protect\citeauthoryear{Pita \bgroup et al\mbox.\egroup
  }{2008}]{Pita08}
Pita, J.; Jain, M.; Marecki, J.; Ord{\'{o}}{\~{n}}ez, F.; Portway, C.; Tambe,
  M.; Western, C.; Paruchuri, P.; and Kraus, S.
\newblock 2008.
\newblock Deployed {ARMOR} protection: the application of a game theoretic
  model for security at the {L}os {A}ngeles international airport.
\newblock In {\em AAMAS},  125--132.

\bibitem[\protect\citeauthoryear{Pita \bgroup et al\mbox.\egroup
  }{2009}]{Pita09}
Pita, J.; Jain, M.; Ord{\'{o}}{\~{n}}ez, F.; Tambe, M.; Kraus, S.; and
  Magori{-}Cohen, R.
\newblock 2009.
\newblock Effective solutions for real-world stackelberg games: when agents
  must deal with human uncertainties.
\newblock In {\em AAMAS},  369--376.

\bibitem[\protect\citeauthoryear{Sandholm}{2015}]{sandholm15}
Sandholm, T.
\newblock 2015.
\newblock Solving imperfect-information games.
\newblock {\em Science} 347(6218):122--123.

\bibitem[\protect\citeauthoryear{Shieh \bgroup et al\mbox.\egroup
  }{2012}]{Shieh12}
Shieh, E.; An, B.; Yang, R.; Tambe, M.; Baldwin, C.; DiRenzo, J.; Maule, B.;
  and Meyer, G.
\newblock 2012.
\newblock {PROTECT:} {A} deployed game theoretic system to protect the ports of
  the {U}nited {S}tates.
\newblock In {\em AAMAS},  13--20.

\bibitem[\protect\citeauthoryear{Tambe}{2011}]{tambe11}
Tambe, M.
\newblock 2011.
\newblock {\em Security and Game Theory - Algorithms, Deployed Systems, Lessons
  Learned}.
\newblock Cambridge University Press.

\bibitem[\protect\citeauthoryear{Tsai \bgroup et al\mbox.\egroup
  }{2009}]{tsai09}
Tsai, J.; Kiekintveld, C.; Ordonez, F.; Tambe, M.; and Rathi, S.
\newblock 2009.
\newblock {IRIS}-{A} tool for strategic security allocation in transportation
  networks.
\newblock In {\em AAMAS},  37--44.

\bibitem[\protect\citeauthoryear{Varakantham, Lau, and
  Yuan}{2013}]{Varakantham13}
Varakantham, P.; Lau, H.~C.; and Yuan, Z.
\newblock 2013.
\newblock Scalable randomized patrolling for securing rapid transit networks.
\newblock In {\em IAAI},  1563--1568.

\bibitem[\protect\citeauthoryear{Von~Stengel and Zamir}{2004}]{von04}
Von~Stengel, B., and Zamir, S.
\newblock 2004.
\newblock Leadership with commitment to mixed strategies.
\newblock {\em Technical Report LSE-CDAM-2004-01, CDAM Research Report.}

\bibitem[\protect\citeauthoryear{von Stengel and Zamir}{2010}]{Stengel10}
von Stengel, B., and Zamir, S.
\newblock 2010.
\newblock Leadership games with convex strategy sets.
\newblock {\em Games and Economic Behavior} 69(2):446--457.

\bibitem[\protect\citeauthoryear{Xu \bgroup et al\mbox.\egroup }{2017}]{Xu17}
Xu, H.; Ford, B.~J.; Fang, F.; Dilkina, B.; Plumptre, A.~J.; Tambe, M.;
  Driciru, M.; Wanyama, F.; Rwetsiba, A.; Nsubaga, M.; and Mabonga, J.
\newblock 2017.
\newblock Optimal patrol planning for green security games with black-box
  attackers.
\newblock In {\em GameSec},  458--477.

\bibitem[\protect\citeauthoryear{Yang \bgroup et al\mbox.\egroup
  }{2014}]{Yang14}
Yang, R.; Ford, B.~J.; Tambe, M.; and Lemieux, A.
\newblock 2014.
\newblock Adaptive resource allocation for wildlife protection against illegal
  poachers.
\newblock In {\em AAMAS},  453--460.

\bibitem[\protect\citeauthoryear{Yin, An, and Jain}{2014}]{Yin14}
Yin, Y.; An, B.; and Jain, M.
\newblock 2014.
\newblock Game-theoretic resource allocation for protecting large public
  events.
\newblock In {\em AAAI},  826--834.

\end{thebibliography}

\appendix

\section{Appendix}

\subsection{Proof of Lemma~\ref{lmm:idc2fsb}}

W.l.o.g., we assume all payoff parameters are integers encoded in binary. To prove {Lemma~\ref{lmm:idc2fsb}}, we need the following preliminary results.

\smallskip
\noindent
\textbf{Claim 1.} \emph{Suppose $A$ is an $n \times n$ invertible matrix and $\mathbf{b}$ is a vector of size $n$; all the entries of $A$ and $\mathbf{b}$ are integers and have their absolute values bounded by $M$. 
Then every component of $A^{-1} \mathbf{b}$ can be written as a fraction $p/q$ with $p$ and $q$ being integers, $|p| \le (nM)^n$ and  $|q| \le (nM)^n$.}
\smallskip

\begin{proof}
Let $\mathbf{x} = A^{-1} \mathbf{b}$. From Cramer's rule, we have
\[
    x_i = \frac{\det(A^{(i)})}{\det(A)},
\]
where $A^{(i)}$ is $A$ with the $i$-th column replaced by $b$.
Expanding the determinants, we have
\[
    \det(A) = \sum_{\sigma} (-1)^{|\sigma|} \prod_{i=1}^n a_{i,\sigma(i)},
\]
where the summation is over all $n!$ permutations of $(1,\dots,n)$, and $|\sigma|$ denotes the number of inversions of the permutation $\sigma$.
It follows that
\[
    |\det(A)| \le \sum_{\sigma} \prod_{i=1}^n \le n! M^n \le (nM)^n.
\]
The same applies to $|\det(A^{(i)})|$.
\end{proof}

\smallskip

\newcommand{\Mbound}{2(n+1) ( n^2 M_0 )^{n^2}}

\noindent
\textbf{Claim 2.} \emph{
Let $\mathcal{C}' = \{ \mathbf{c} : \mathbf{c} \in\mathcal{C} \wedge U_a(\mathbf{c},t) \ge U_a(\mathbf{c},t')\ \forall t'\neq t\}$.
Every vertex of $\mathcal{C}'$, as a vector, can be written in the form $(\frac{p_1}{q_1},\dots,\frac{p_n}{q_n})$ with each $p_t$ and $q_t$ being integers bounded by $\Mbound$, where $M_0$ is the bound of the payoff parameters.}
\smallskip

\begin{proof}
Let $\mathcal{C}^U := \{\mathbf{c} : U_a(\mathbf{c},t) \ge U_a(\mathbf{c},t')\ \forall t'\neq t\}$, so equivalently, $\mathcal{C}' = \mathcal{C} \cap \mathcal{C}^U$.
Consider a vertex $\mathbf{v}$ of $\mathcal{C}'$. 
Since $\mathbf{v} \in \mathcal{C}$, $\mathbf{v}$ must be supported on $\tilde{n} +1$ \emph{vertices} of $\mathcal{C}$, say $\mathbf{v}_0, \mathbf{v}_1, \dots, \mathbf{v}_{\tilde{n}}$ and can be written as the convex combination: $\mathbf{v} = \sum_{i = 0}^{\tilde{n}} \alpha_i \mathbf{v}_i$, where $\alpha_i > 0$ and $\sum_{i = 0}^{\tilde{n}} \alpha_i =1$;
In particular, we pick the $\tilde{n}$ such that no support set of size smaller than $\tilde{n} +1$ exists, so that $\mathbf{v}_0, \dots, \mathbf{v}_{\tilde{n}}$ are affine independent, or in other words, $\mathbf{v}_1 - \mathbf{v}_0, \mathbf{v}_2 - \mathbf{v}_0, \dots, \mathbf{v}_{\tilde{n}}- \mathbf{v}_0$ are linear independent.

If $\tilde{n} = 0$, then $\mathbf{v}$ is simply a vertex of $\mathcal{C}$ and is in $\{0,1\}^n$ as $\mathcal{C}$ is the convex hull of the pure strategy set $\mathbf{P} \subseteq \{0,1\}^n$.
Obviously, $\mathbf{v}$ is in the fractional form $(\frac{p_1}{q_1},\dots,\frac{p_n}{q_n})$ with $p_t$ and $q_t$ bounded by $\Mbound$.

It remains to consider the case when $\tilde{n} > 0$.
Since $\mathbf{v} \in \mathcal{C}^U$, it holds that $U_a(\mathbf{v},t) \ge U_a(\mathbf{v},t')$ for all $t' \neq t$.
Particularly, for these inequalities, we pick out those \emph{not strictly} satisfies by $\mathbf{v}$ and arrange them in the form ${A} \mathbf{c} \ge \mathbf{b}$ (so that ${A} \mathbf{v} = \mathbf{b}$); similarly, those strictly satisfies in the form ${B} \mathbf{c} \ge \mathbf{d}$ (so that ${B} \mathbf{v} > \mathbf{d}$).
We have $\operatorname{rank}(A) \ge \tilde{n}$ since otherwise $\operatorname{rank}(AV) \le \operatorname{rank}(A) < \min\{ n_{\textnormal{row}}(A), \tilde{n}\}$ for $V = \begin{bmatrix} \mathbf{v}_1 - \mathbf{v}_0 & \mathbf{v}_2 - \mathbf{v}_0 & \cdots & \mathbf{v}_{\tilde{n}} - \mathbf{v}_0 \\ \end{bmatrix}$, where $n_{\textnormal{row}}(A)$ is the number of rows of $A$; so that $AV$ does not have full rank and there will exist infinitely many $\vec{\alpha'}$ satisfying
\begin{equation}
\label{eq:AValp}
    A V \vec{\alpha}' = \mathbf{b} - A \mathbf{v}_0.
\end{equation}
We show that this will lead to a contradiction.
Denote $\vec{\alpha} = {\begin{bmatrix}\alpha_{1}\; \alpha_{2}\; \cdots \; \alpha_{\tilde{n}}\end{bmatrix}}^{\rm {T}}$ and pick one $\vec{\alpha}' \neq \vec{\alpha}$ that satisfies Eq.~\eqref{eq:AValp}.
Let $\vec{\alpha}^{(1)} = (1-\epsilon) \vec{\alpha} + \epsilon \vec{\alpha}'$ and $\vec{\alpha}^{(2)} = (1+\epsilon) \vec{\alpha} - \epsilon \vec{\alpha}'$; and let $\vec{\nu}^{(1)} = (1 - \sum_{i = 1}^{\tilde{n}} \alpha_i^{(1)}) \mathbf{v}_0 + \sum_{i = 1}^{\tilde{n}} \alpha_i^{(1)} \mathbf{v}_i$ and $\vec{\nu}^{(2)} = (1 - \sum_{i = 1}^{\tilde{n}} \alpha_i^{(2)}) \mathbf{v}_0 + \sum_{i = 1}^{\tilde{n}} \alpha_i^{(2)} \mathbf{v}_i$.
Observe the follows:
\begin{itemize}
\item Given that $\vec{\alpha} > {0}$ and $1 - \sum_{i = 1}^{\tilde{n}} \alpha_i = \alpha_0 > 0$, we can have $\vec{\alpha}^{(1)}$ and $\vec{\alpha}^{(2)}$ satisfy the same by choosing an $\epsilon$ sufficiently close to $0$, so that each one of $\vec{\alpha}^{(1)}$ and $\vec{\alpha}^{(2)}$ defines a convex combination, and both $\vec{\nu}^{(1)}$ and $\vec{\nu}^{(2)}$, as convex combinations of $\mathbf{v}_0,\dots, \mathbf{v}_{\tilde{n}}$, will be in $\mathcal{C}$.

\item Similarly, when $\epsilon$ is sufficiently close to $0$ we can have $\vec{\nu}^{(1)}$ and $\vec{\nu}^{(2)}$ arbitrarily close to $\mathbf{v}$, so that ${B} \vec{\nu}^{(1)} > \mathbf{d}$ and ${B} \vec{\nu}^{(2)} > \mathbf{d}$ will hold.

\item $\vec{\alpha}$ satisfy Eq.~\eqref{eq:AValp}, which is easy to verify, and $\vec{\alpha}$ satisfy Eq.~\eqref{eq:AValp} by definition, so that as linear combinations of $\vec{\alpha}$ and $\vec{\alpha}'$, $\vec{\alpha}^{(1)}$ and $\vec{\alpha}^{(2)}$ also satisfy Eq.~\eqref{eq:AValp}; this immediately gives $A\vec{\nu}^{(1)} = \mathbf{b}$ and $A\vec{\nu}^{(2)} = \mathbf{b}$, and in turn $\begin{bmatrix} A\\B \end{bmatrix} \vec{\nu}^{(1)} \ge \begin{bmatrix} \mathbf{b}\\\mathbf{d} \end{bmatrix}$ and $\begin{bmatrix} A\\B \end{bmatrix} \vec{\nu}^{(2)} \ge \begin{bmatrix} \mathbf{b}\\\mathbf{d} \end{bmatrix}$, so that both $\vec{\nu}^{(1)}$ and $\vec{\nu}^{(2)}$ are in $\mathcal{C}^U$.
\end{itemize}
Therefore, by choosing an $\epsilon$ sufficiently close to $0$, we can have $\vec{\nu}^{(1)} \in \mathcal{C} \cap \mathcal{C}^U = \mathcal{C}'$ and so is $\vec{\nu}^{(2)}$; however, since $\mathbf{v} = \frac{1}{2} \vec{\nu}^{(1)} + \frac{1}{2} \vec{\nu}^{(2)}$ and $\vec{\nu}^{(1)} \neq \vec{\nu}^{(2)} \neq \mathbf{v}$, this contradicts that $\mathbf{v}$ is a \emph{vertex} of $\mathcal{C}'$.
As a result, $\operatorname{rank}(A) \ge \tilde{n}$, and we can choose $\tilde{n}$ linear independent rows of $A$ to form a submatrix $A'$; denote also the corresponding rows of $\mathbf{b}$ be $\mathbf{b}'$.
We have $\operatorname{rank} (A') = \tilde{n}$ and $ A' V \vec{\alpha} = \mathbf{b}' - A' \mathbf{v}_0$.
Note that $\operatorname{rank} (V) = \tilde{n}$ as $\mathbf{v}_1 - \mathbf{v}_0, \mathbf{v}_2 - \mathbf{v}_0, \dots, \mathbf{v}_{\tilde{n}}- \mathbf{v}_0$ are linear independent.
By Sylvester's rank inequality~\cite{hohn2013elementary},
\[\operatorname {rank} (A')+\operatorname {rank} (V)- \tilde{n} \leq \operatorname {rank} (A'V),\]
so that $\operatorname {rank} (A'V) \ge \tilde{n}$ and
$A'V$ is a full-rank $\tilde{n} \times \tilde{n}$ matrix.
This gives $\vec{\alpha} = ( A' V )^{-1} ( \mathbf{b}' - A' \mathbf{v}_0)$.
By Claim 1, all $\alpha_i$ can be written in the form $p/q$ with $p$ and $q$ bounded by
\[
    M_1 = (n^2 M_0)^n,
\]
(note that the entries of $A' V$ and $\mathbf{b}' - A' \mathbf{v}_0$ are bounded by $nM_0$).
Moreover, given that $\mathbf{v} = \sum_{i=0}^{\tilde{n}} \alpha_i \mathbf{v}_i =  \mathbf{v}_0 + \sum_{i = 1}^{\tilde{n}} \alpha_i (\mathbf{v}_i - \mathbf{v}_0)$ and that $\mathbf{v}_i$ has components either $0$ or $1$, each component of $\mathbf{v}$ is in the form $p/q$ with $p$ and $q$ bounded by
\begin{align*}
    M_2 = 2(n+1) M_1^n = \Mbound. & \qedhere
\end{align*}
\end{proof}

\begin{proof}[Proof of Lemma~\ref{lmm:idc2fsb}]
To decide the inducibility of a target $t$ is to check if there exists some $\mathbf{c}\in\mathcal{C}$ such that $U_a(\mathbf{c},t) > U_a(\mathbf{c},t')$ for all $t'\neq t$.
%
We show that the existence of such a $\mathbf{c}$
leads to the existence of a $\mathbf{c}^* \in \mathcal{C}$ such that $U_a(\mathbf{c}^*,t) \ge U_a(\mathbf{c}^*,t') + \frac{1}{(n+1) M_2^2}$ for all $t' \neq t$, so that the problems transforms to one of verifying weak satisfaction.

Now suppose that $t$ is inducible, and $\mathbf{c} \in \mathcal{C}$ is such that $U_a(\mathbf{c},t) > U_a(\mathbf{c},t')$ for all $t'\neq t$.
Let $\mathcal{C}' = \{ \mathbf{s}: \mathbf{s} \in\mathcal{C} \wedge U_a(\mathbf{s},t) \ge U_a(\mathbf{s},t')\ \forall t'\neq t\}$.
We have $\mathbf{c} \in \mathcal{C}$ and, since $\mathcal{C}$ is closed, $\mathbf{c}$ is supported on $n+1$ {vertices} of $\mathcal{C}$, say $\mathbf{v}_1, \dots, \mathbf{v}_{n+1}$, in a way such that $\mathbf{c} = \sum_{i =1}^{n+1} \alpha_i \mathbf{v}_i$ for some nonnegative values $\alpha_i$ with $\sum_{i =1}^{n+1} \alpha_i =1$.
%
%
%
Obviously, $U_a(\mathbf{v}_i,t) \ge U_a(\mathbf{v}_i,t')$ for all $i$ since $\mathbf{v}_i \in \mathcal{C}'$.
Moreover, for each $t'$, at least one $\mathbf{v}_i$ must strictly satisfy the constraint as otherwise we would have $U_a(\mathbf{c},t) - U_a(\mathbf{c},t') = \sum_i \alpha_i (U_a(\mathbf{v}_i,t) - U_a(\mathbf{v}_i,t')) = 0$, contradicting that $U_a(\mathbf{c},t) > U_a(\mathbf{c},t')$.
By Claim~2, $\mathbf{v}_1, \dots, \mathbf{v}_{n+1}$ can be written in the form $(\frac{p_1}{q_1},\dots,\frac{p_n}{q_n})$ with each $p_t$ and $q_t$ bounded by $M_2 = \Mbound$.
It follows that whenever $U_a(\mathbf{c},t) - U_a(\mathbf{c},t') > 0$, we have $U_a(\mathbf{v}_i,t) - U_a(\mathbf{v}_i,t') > 0$ for some $\mathbf{v}_i$; or put it differently,
\[
    \textstyle \lambda_t \cdot \frac{p_t}{q_t} - \lambda_{t'} \cdot \frac{p_{t'}}{q_{t'}} + \gamma_{t,t'} > 0,
\]
where $\lambda_t = U_a^c(t) - U_a^u(t)$, $\lambda_{t'} = U_a^c(t') - U_a^u(t')$ and $\gamma_{t,t'} = U_a^u(t) - U_a^u(t')$.
Equivalently,
\[
    \lambda_t \cdot {p_t} {q_{t'}} - \lambda_{t'} \cdot {p_{t'}}{q_{t}} + \gamma_{t,t'} \cdot {q_{t}} {q_{t'}} > 0.
\]
Now that all the coefficients are integers, the above inequality further implies
\[
    \lambda_t \cdot {p_t} {q_{t'}} - \lambda_{t'} \cdot {p_{t'}}{q_{t}} + \gamma_{t,t'} \cdot {q_{t}} {q_{t'}} \ge 1,
\]
so that
\[
    \textstyle U_a(\mathbf{v}_i,t) - U_a(\mathbf{v}_i,t')  = \lambda_t \cdot \frac{p_t}{q_t} - \lambda_{t'} \cdot \frac{p_{t'}}{q_{t'}} + \gamma_{t,t'} \ge \frac{1}{q_t q_{t'}} \ge \frac{1}{M_2^2}.
\]

Now consider the point $\mathbf{c}^* = \sum_{i=1}^{n+1} \frac{1}{n+1} \mathbf{v}_i$.
We have $\mathbf{c}^* \in \mathcal{C}'$ as it is in the convex hull of $\mathbf{v}_1, \dots, \mathbf{v}_{n+1}$.
In addition, for all $t' \neq t$,
\begin{align*}
     & \textstyle U_a(\mathbf{c}^*,t) - U_a(\mathbf{c}^*,t') %
    = \sum_{i=1}^{n+1} \frac{1}{n+1} \left( U_a(\mathbf{v}_i,t) - U_a(\mathbf{v}_i,t') \right) \\%
     & \textstyle \quad \ge \max_{i \in [n+1]} \frac{1}{n+1} \left( U_a(\mathbf{v}_i,t) - U_a(\mathbf{v}_i,t') \right) \ge \frac{1}{(n+1)M_2^2},
\end{align*}
from which we obtain the following key component:
{\small
\begin{align}
    &\exists\, \mathbf{c} \in \mathcal{C}, \forall\, t' \neq t:  U_a(\mathbf{c},t) > U_a(\mathbf{c},t') \ \Longleftrightarrow \nonumber \\
    &\textstyle \exists\, \mathbf{c}^* \in \mathcal{C}, \forall\, t' \neq t:  U_a(\mathbf{c}^*,t) \ge U_a(\mathbf{c}^*,t') + \frac{1}{(n+1) M_2^2}. \label{eq:ctocstar}
\end{align}
}

To decide the inducibility of target $t$, that is, if there exists some $\mathbf{c}\in\mathcal{C}$ such that $U_a(\mathbf{c},t) > U_a(\mathbf{c},t')$ for all $t'\neq t$, we construct a new game parameterized by $\hat{U}_a^c(\cdot)$ and $\hat{U}_a^u(\cdot)$, such that:
\begin{itemize}
  \item $\hat{U}_a^c(t') = (n+1) M_2^2 \cdot U_a^c(t')$ and $\hat{U}_a^u(t') = (n+1) M_2^2 \cdot U_a^u(t')$ for all $t' \neq t$.

  \item $\hat{U}_a^c(t) = (n+1) M_2^2 \cdot U_a^c(t) - 1$ and $\hat{U}_a^u(t) = (n+1) M_2^2 \cdot U_a^u(t) - 1$.
\end{itemize}
In such a way, it holds that
\begin{align*}
    &\exists\, \mathbf{c} \in \mathcal{C}: \hat{U}_a(\mathbf{c},t) \ge \hat{U}_a(\mathbf{c},t') \Longleftrightarrow \\%
 &\exists\, \mathbf{c} \in \mathcal{C}: (n+1) M_2^2 \cdot U_a(\mathbf{c},t) \ge (n+1) M_2^2 \cdot U_a(\mathbf{c},t') + 1 \\& \overset{\text{by Eq~\eqref{eq:ctocstar}}}{\Longleftrightarrow} %
\exists\, \mathbf{c} \in \mathcal{C}: U_a(\mathbf{c},t) > U_a(\mathbf{c},t');
\end{align*}
Namely, the inducibility of target $t$ is equivalent to the feasibility of $t$ in the new game.
The length of the new parameters (as binaries) is bounded by $\log ( (n+1) M_2^2 \cdot M_0) = O( n^2 \log n + n^2 \log M_0)$, a polynomial in the original input size.
This is a polynomial-time reduction from the inducibility problem to the feasibility problem.
\end{proof}

\end{document}